\newcommand{\norm}[1]{\left\lVert#1\right\rVert} 
\theoremstyle{plain}
\newtheorem{theorem}{Theorem}
\newtheorem{corollary}[theorem]{Corollary}
\newtheorem{proposition}[theorem]{Proposition}
\newtheorem{lemma}[theorem]{Lemma}
\newtheorem{definition}[theorem]{Definition}  
\newtheorem{remark}[theorem]{Remark}  
\newtheorem*{remark*}{Remark}   
\renewcommand\qedsymbol{$\blacksquare$}
\newenvironment{proof-of}[1][{\hspace{-\blank}}]{{\medskip\noindent\textit{Proof~{#1}.\ }}}{\hfill\qedsymbol}
\renewcommand{\Tr}{{\operatorname{Tr}\,}}
\newcommand{\id}{{\operatorname{id}}}
\newcommand{\1}{\openone}
\newcommand{\proj}[1]{|#1\rangle\!\langle #1|}
\newcommand{\cD}{{\mathcal{D}}}
\newcommand{\cT}{{\mathcal{T}}}
\newcommand{\KI}{{\text{KI}}}
\newcommand{\SW}{{\mathrm{SWAP}}}
\newcommand{\nc}{\newcommand}
\nc{\rnc}{\renewcommand}
\nc{\avg}[1]{\langle#1\rangle}
\nc{\Rank}{\operatorname{Rank}}
\nc{\smfrac}[2]{\mbox{$\frac{#1}{#2}$}}
\renewcommand{\tr}{\operatorname{Tr}}
\nc{\ox}{\otimes}
\nc{\dg}{\dagger}
\nc{\dn}{\downarrow}
\nc{\cA}{{\cal A}}
\nc{\cB}{{\cal B}}
\nc{\cC}{{\cal C}}
\nc{\cF}{{\cal F}}
\nc{\cG}{{\cal G}}
\nc{\cH}{{\cal H}}
\nc{\cI}{{\cal I}}
\nc{\cJ}{{\cal J}}
\nc{\cK}{{\cal K}}
\nc{\cL}{{\cal L}}
\nc{\cM}{{\cal M}}
\nc{\cN}{{\cal N}}
\nc{\cO}{{\cal O}}
\nc{\cP}{{\cal P}}
\nc{\cQ}{{\cal Q}}
\nc{\cR}{{\cal R}}
\nc{\cS}{{\cal S}}
\nc{\cX}{{\cal X}}
\nc{\cY}{{\cal Y}}
\nc{\cZ}{{\cal Z}}
\nc{\csupp}{{\operatorname{csupp}}}
\nc{\qsupp}{{\operatorname{qsupp}}}
\nc{\rar}{\rightarrow}
\nc{\lrar}{\longrightarrow}
\nc{\polylog}{{\operatorname{polylog}}}
\nc{\wt}{{\operatorname{wt}}}
\nc{\RR}{{{\mathbb R}}}
\nc{\CC}{{{\mathbb C}}}
\nc{\FF}{{{\mathbb F}}}
\nc{\NN}{{{\mathbb N}}}
\nc{\ZZ}{{{\mathbb Z}}}
\nc{\PP}{{{\mathbb P}}}
\nc{\QQ}{{{\mathbb Q}}}
\nc{\UU}{{{\mathbb U}}}
\nc{\EE}{{{\mathbb E}}}
\nc{\Hom}[2]{\mbox{Hom}(\CC^{#1},\CC^{#2})}
\nc{\rU}{\mbox{U}}
\nc{\ob}[1]{#1}
\nc{\SEP}{{\text{SEP}}}
\nc{\NS}{{\text{NS}}}
\nc{\LOCC}{{\text{LOCC}}}
\nc{\PPT}{{\text{PPT}}}
\nc{\EXT}{{\text{EXT}}}
\nc{\Sym}{{\operatorname{Sym}}}
\nc{\ERLO}{{E_{\text{r,LO}}}}
\nc{\ERLOCC}{{E_{\text{r,LOCC}}}}
\nc{\ERPPT}{{E_{\text{r,PPT}}}}
\nc{\ERLOCCinfty}{{E^{\infty}_{\text{r,LOCC}}}}
\nc{\Aram}{{\operatorname{\sf A}}}
\begin{document}

\title{\huge{Strong  Converse Bounds \protect\\  for Compression of Mixed States}}


\author{Zahra Baghali Khanian}
\email{zbkhanian@gmail.com }
\email{zkhanian@perimeterinstitute.ca}
\affiliation{Perimeter Institute for Theoretical Physics, Ontario, Canada, N2L 2Y5}
\affiliation{Institute for Quantum Computing, University of Waterloo, Ontario, Canada, N2L 3G1}
\begin{abstract}
In this paper, we study strong converse properties for both visible and blind compression of mixed states.  
The optimal rate of a visible compression scheme is obtained in terms of the entanglement of purification,
whose additivity remains unknown so far.
For a variation of extendible states, we prove that the entanglement of purification is additive and apply
this  to obtain a ``pretty strong''  converse bound for the blind and visible compression of such states.
Namely, when the rate decreases below the optimal rate, the error exhibits a discontinuous jump from 0 to
at least $\frac{1}{3\sqrt{2}}$.


To deal with the visible case for general states, we define a new quantity $E_{\alpha,p}(A:R)_{\rho}$ for a bipartite state  $\rho^{AR}$  and $\alpha \in (0,1)\cup (1,\infty)$ as the $\alpha$-R\'enyi generalization of the entanglement of purification $E_{p}(A:R)_{\rho}$.  For $\alpha=1$, we define $E_{1,p}(A:R)_{\rho}:=E_{p}(A:R)_{\rho}$. We show that for any rate below  the regularization  $\lim_{\alpha \to 1^+}E_{\alpha,p}^{\infty}(A:R)_{\rho}:=\lim_{\alpha \to 1^+} \lim_{n \to \infty} \frac{E_{\alpha,p}(A^n:R^n)_{\rho^{\otimes n}}}{n}$ the fidelity for the visible compression
 exponentially converges to zero.

Moreover, we  consider blind compression of a  general mixed-state source $\rho^{AR}$ shared between an encoder and an inaccessible reference system $R$.
We obtain a strong converse bound for the compression of this source by assuming that the decoder is a super-unital channel. This immediately implies a strong converse for the blind compression of ensembles of mixed states, by assuming a super-unital decoder, as this is a special case of the general mixed-state source $\rho^{AR}$ where the reference system $R$ has a classical structure.

%
\end{abstract}
\maketitle
\section{Introduction and compression model}\label{sec: intro}

Quantum source compression was pioneered by Schumacher in \cite{Schumacher1995} where he provided
two definitions for a quantum source: 1. a quantum system together with correlations with a
purifying reference system, and 2. an ensemble of pure states. For both models, he showed that
the optimal compression rate is equal to the von Neumann entropy of the quantum system.
The compression of an ensemble source is respectively called visible or blind compression if the identity of the state is known to the encoder or not.
For ensemble of pure states, both blind and visible compression models lead to the same optimal compression rate \cite{Schumacher1995,Jozsa1994_1}.
Horodecki generalized the ensemble model of Schumacher to mixed states and found a multi-letter optimal rate considering the visible scheme \cite{Horodecki2000,Horodecki1998}. Later, Hayashi showed that this optimal rate is equal to 
the regularized entanglement of the purification of the source \cite{visible_Hayashi}.
Contrary to ensemble of pure states,  the blind and visible compression of the ensemble of mixed states do not lead to the same compression rate. In the blind model, the optimal rate is characterized by a single-letter rate formula in terms of a decomposition of the ensemble, called the Koashi-Imoto decomposition \cite{KI2001,KI2002}.

Recently, a general mixed-state source is considered in \cite{ZBK_PhD,ZK_mixed_state_ISIT_2020,general_mixed_state_compression} where a quantum source is defined as a quantum system together with correlations with a general quantum reference system. It is shown that the models defined by Schumacher as well as the ensemble of mixed states (blind scheme) are special cases of
this general mixed-state model. 
The optimal compression rate of this source is characterized by a single-letter entropic quantity in terms of a decomposition of the source which is a generalization of the Koashi-Imoto decomposition \cite{Hayden2004}.

Proving the optimality of the compression rates requires establishing converse bounds.
In  so-called weak converse proofs, the compression rate is lower bounded in the limit of 
the error converging to zero. 
%
%
However, weak converses leave open whether there is a trade-off between the compression rate and the error, that is if a smaller compression rate can be achieved at the expense of having a larger error. A strong converse, on the other hand, states that such a trade-off does not hold in the asymptotic limit of many copies of the source, that is for any rate below the optimal rate the error converges
to 1.
For  ensembles of pure states assuming that the decoder is a unitary, a strong converse was proved in \cite{Jozsa1994_1}. The assumption of the unitary decoder was lifted in \cite{Winter1999} where a strong converse bound is obtained using trace inequalities. 
However, despite all the progress in establishing strong converse bounds in various information-theoretic tasks, which deal with processing ensembles of pure states or
quantum systems with purifying reference systems e.g. \cite{Leditzky2016,Winter1999},
it was an open problem so far whether the strong converse holds for the compression of mixed states.
%
%

%
In this paper, we introduce partially-exchangeable states in Definition~\ref{def: XYZ}, and prove that the entanglement of purification is additive for these states.
Then, we show that a ``pretty strong'' converse bound (in the sense introduced in \cite{AW_pretty}) holds for both blind and visible compression of these states.  Namely, we show that the compression rate is lower bounded by the entropy of the system  for any error
$\epsilon< \frac{1}{3\sqrt{2}}$. 
We also deal with the visible compression of general states by introducing 
new quantity $E_{\alpha,p}(A:R)_{\rho}$ in Definition~\ref{def: E_alpha_p} 
as   an $\alpha$-R\'enyi generalization of the entanglement of purification
and obtain a bound on the fidelity with respect to the regularization of this quantity i.e. $E_{\alpha,p}^{\infty}(A:R)_{\rho}$.
We conclude that continuity of $E_{\alpha,p}^{\infty}(A:R)_{\rho}$ at $\alpha=1$ in the limit of $\alpha \to 1^+$ implies  that the strong converse  holds for the visible compression of mixed states. It remains an open question if $\lim_{\alpha \to 1^+}E_{\alpha,p}^{\infty}(A:R)_{\rho}=E_{p}^{\infty}(A:R)_{\rho}$ holds true.
Finally, we consider the blind  case and obatin a strong converse for the compression of a general mixed-state source
as defined in \cite{ZBK_PhD,ZK_mixed_state_ISIT_2020,general_mixed_state_compression} by assuming that the decoder is a super-unital channel. 
This immediately implies a strong converse for the blind compression of the ensemble of mixed states, by assuming super-unital decoders,  since this ensemble source is a special case of the general mixed-state source where the reference system has a classical structure.
%
%

%

\medskip
The organization of the paper is as follows. We introduce some notation and convention at the end of this section. In Section~\ref{sec: compression model} we define the asymptotic compression task which unifies the visible and blind model.
Section~\ref{sec: XYZ} is dedicated to the compression of partially-exchangeable states for which the entanglement of purification is additive.
In  Section~\ref{section: visible} and Section~\ref{section: blind}, we  address respectively the   visible and blind schemes. Finally, we discuss our results in Section~\ref{sec: discussion}.

\medskip

\textbf{Notation and Definitions.}  
In this paper, quantum systems are associated with finite dimensional Hilbert spaces $A$, $R$, etc.,
whose dimensions are denoted by $|A|$, $|R|$, respectively. 
We denote the set of normalized and sub-normalized (trace less than or equal to 1)  quantum states on Hilbert space $A$ by $\cS(A)$ and $\cS_{\leq}(A)$, respectively. 
%
%
%
%

The fidelity between two states $\rho$ and $\xi$ is defined as 
\(
 F(\rho, \xi) = \|\sqrt{\rho}\sqrt{\xi}\|_1 
                = \Tr \sqrt{\rho^{\frac{1}{2}} \xi \rho^{\frac{1}{2}}},
\) 
where  $\|X\|_1 = \Tr|X| = \Tr\sqrt{X^\dagger X} $ is the Schatten 1-norm. The purified distance between two states $\rho$ and $\xi$
is defined as $P(\rho, \xi):=\sqrt{1-F^2(\rho, \xi)}$.
It relates to the trace distance in the following well-known way \cite{Fuchs1999}:
\begin{equation}\label{lemma: Fuchs}
  1-F(\rho,\xi) \leq \frac12\|\rho-\xi\|_1 \leq \sqrt{1-F(\rho,\xi)^2}.
\end{equation}

For a state $\rho$, the von Neumann entropy and the $\alpha$-R\'enyi entropy for $\alpha\in (0,1)\cup (1,\infty)$ are defined respectively as
\begin{align}
    S(\rho) &:= - \Tr\rho\log\rho \\
    S_{\alpha}(\rho)&:= \frac{1}{1-\alpha}\log \Tr \rho^{\alpha}. 
\end{align}
Throughout this paper, $\log$ denotes by default the binary logarithm. 

The entanglement of purification for a bipartite state $\rho^{AB}$ is defined as 
\begin{align}
    E_{p}(A:B)_{\rho}:=\inf_{V:C \to A'B'} S(AA')_{\sigma}, \nonumber
\end{align}
where the infimum  is over all isometries $V:C \to A'B'$, and $\sigma^{AA'}=\Tr_{BB'} \left((\1_{AB} \otimes V)\proj{\psi}^{ABC}(\1_{AB} \otimes V)^{\dagger}\right)$
for a purification $\ket{\psi}^{ABC}$ of $\rho^{AB}$.
In this definition, there is dimension bound on systems $A'$ and $B'$, hence, the infimum  is  attainable  \cite{E_p}.   

For a bipartite state $\rho^{AB} \in \cS_{\leq}(AB)$ the min-entropy of $A$ conditioned on $B$ is defined as
\begin{align}
H_{\min}(A|B)_{\rho}:=\max_{\sigma_B \in \cS(B)} \max \{\lambda \in \mathbb{R} : \rho^{AB} \leq 2^{- \lambda} \1 \ox \sigma^{B} \}.
\end{align}
With a purification $\ket{\psi}^{ABC}$ of $\rho^{AB}$, we define the max-entropy 
\begin{align}
H_{\max}(A|B)_{\rho}:=-H_{\min}(A|C)_{\psi^{AC}},
\end{align}
with the reduced state $\psi^{AC}=\Tr_B ({\psi}^{ABC})$.

Let $\epsilon \geq 0$
and $\rho^{AB} \in \cS_{\leq}(AB)$.
The $\epsilon$-smooth min-entropy of $A$ conditioned on $B$ is defined as
\begin{align}
H^{\epsilon}_{\min}(A|B)_{\rho}:= \max_{\rho'  \approx_{\epsilon} \rho} H_{\min}(A|B)_{\rho'},
\end{align}
where $\rho' \approx_{\epsilon} \rho$ means $P(\rho',\rho) \leq \epsilon$ for $\rho' \in \cS_{\leq}(AB)$. 
Similarly,
\begin{align}\label{eq: smooth duality}
H^{\epsilon}_{\max}(A|B)_{\rho}&:= \min_{\rho'  \approx_{\epsilon} \rho} H_{\max} (A|B)_{\rho'} \nonumber\\
&=-H^{\epsilon}_{\min}(A|C)_{\psi},
\end{align}
with a purification $\ket{\psi}^{ABC}$ of $\rho^{AB}$.

\section{Compression Model}\label{sec: compression model}
We consider a general finite-dimensional source defined as a quantum system $A$ that is correlated with a reference system 
$R$ in an arbitrary way, described by the overall state $\rho^{AR}$  \cite{ZBK_PhD,ZK_mixed_state_ISIT_2020,general_mixed_state_compression}. In particular, the 
reference does not necessarily purify the source, nor is it assumed to be classical. 
%
Let ${\rho}^{ARC}$ be an extension of the source. 
If the encoder has access to system $A$ as well as system $C$, we call the compression scheme visible.
Otherwise, if the encoder has access only to system $A$, we call the compression scheme blind. 
An example of the visible model, is the following ensemble source 
\begin{align}
    \rho^{ACR}=\sum_x p(x) \rho_x^A \otimes \ketbra{x}^C \otimes \ketbra{x}^R, \nonumber
\end{align}
considered in \cite{Horodecki2000,Horodecki1998,visible_Hayashi}, where the encoder has access to system $C$.

%
%
We will consider the information theoretic limit of
many copies of the source $\rho^{AR}$, i.e.~$\rho^{A^n  R^n} = {(\rho^{AR})}^{\otimes n}$.
%
%
The encoder, Alice, performs the encoding compression operation 
$\mathcal{C}:A^n C^n \longrightarrow M $ on the systems $A^n C^n$ which is a quantum channel,
i.e.~a completely positive and trace preserving (CPTP) map. 
Notice that as functions CPTP maps act on the operators (density matrices) over 
the respective input and output Hilbert spaces, but as there is no risk of confusion,
we will simply write the Hilbert spaces when denoting a CPTP map.
Alice's encoding operation produces the state $\sigma^{M R^n}$ 
with $M$  as the compressed system of Alice.
%
%
%
The system $M$ is then sent to Bob via a noiseless quantum channel, who performs
a decoding operation $\mathcal{D}:M \longrightarrow \hat{A}^n $ on the compressed system 
$M$. The output system $\hat{A}^n $ is the reconstruction of the system $A^n$.
Bob's decoding operation produces the state $\xi^{\hat{A}^n R^n}$
 with system $\hat{A}^n$  as the reconstruction of system $A^n$. 
We say  the encoding-decoding scheme has the \emph{error} $ \epsilon$ or \emph{fidelity} $\sqrt{1-\epsilon^2}$ if
 \begin{align}
  \label{eq:fidelity criterion}
    \text{error}&:=P\left( \rho^{A^n R^n },\xi^{\hat{A}^n R^n} \right) \leq \epsilon,\\
  \overline{F}&:=F\left( \rho^{A^n R^n },\xi^{\hat{A}^n R^n} \right) \geq \sqrt{1-\epsilon^2}.
\end{align}
%
%
The minimum dimension $|M|$ of $M$ and the minimum rate, such that there exists an encoding-decoding scheme with error $\epsilon$,  are denoted by $\cM(n,\epsilon)$ and $\cQ(n,\epsilon):=\log \cM(n,\epsilon)/n$, respectively. 
The optimal (asymptotic) compression rate is defined as $\lim_{\epsilon \to 0}\lim_{n \to \infty}\cQ(n,\epsilon)/n$.
We use subscripts $v$ or $b$ to refer to visible or blind cases, respectively.

%

According to Stinespring's theorem \cite{Stinespring1955}, a CPTP map 
$\cT: A \longrightarrow \hat{A}$ can be dilated to an isometry $U: A \hookrightarrow \hat{A} E$
with $E$ as an environment system, called an isometric extension of a CPTP map, such that 
$\cT(\rho^A)=\Tr_E (U \rho^A U^{\dagger})$. 
Therefore, the encoding and decoding operations  can in general be viewed as 
isometries $U_{\cC} : A^n  \hookrightarrow M W$ and
$U_{\cD} : M \hookrightarrow \hat{A}^n V$, respectively, 
with the systems $W$ and $V$ as the environment systems
of Alice and Bob, respectively.


\section{Compression of partially-exchangeable states}\label{sec: XYZ}

In this section, we  introduce partially-exchangeable states, and prove that the entanglement of purification is additive for these states.
Then, we show that a ``pretty strong'' converse bound (in the sense of \cite{AW_pretty}) holds for both blind and visible compression of these states.  
%

\subsection{Additivity of the entanglement of purification for partially-exchangeable states}\label{sec: additivity}

\begin{definition}\label{def: XYZ}
We call $\rho^{AB}$ a partially-exchangeable state if $B$ can be decomposed, via an isometry $U:B \hookrightarrow A'B'$ to systems $A'B'$, and $\proj{\varphi}^{AA'B'C}=(\1_{A}\ox U\ox \1_C)\proj{\psi}^{ABC}(\1_{A}\ox U\ox \1_C)^{\dagger}$ is invariant under the permutation of $A$ and $A'$ for any purification $\ket{\psi}^{ABC}$ of the state.
Namely, $\proj{\varphi}^{AA'B'C}=( \SW_{AA'}\ox \1_{B'C})\proj{\varphi}^{AA'B'C} ( \SW_{AA'}\ox \1_{B'C})^{\dagger}$ holds, where $\SW_{AA'}=\sum_{i,j}\ketbra{i}{j}^{A} \ox \ketbra{j}{i}^{A'}$.
\end{definition}
This definition is related to the (pure) extendibility of the state ${\varphi}^{A'B'C}$ with respect to $A'$. We  remind that a state $\rho^{AC}$ is called extendible with respect to system $A$ if there exists a state  $\sigma^{AA'C}$ such that it is invariant under the permutation of $A$ and $A'$, and $\tr_{A'}(\sigma^{AA'C})=\rho^{AC}$ \cite{Werner_extendible}.
Pure extendibility of the state ${\varphi}^{A'B'C}$ with respect to $A'$ implies that ${\varphi}^{AA'}$ has 
support entirely within the symmetric subspace (bosonic extension) or the antisymmetric subspace (fermionic extension) 
\cite{Myhr2009,Myhr_PhD,Kaur2018}.  
For a state ${\varphi}^{AA'}$, with support entirely within the symmetric or the antisymmetric subspace,
it is shown in \cite{Christandl_EoP} that the entanglement of purification is additive.
We show that  partially-exchangeable states have additive entanglement of purification as well.

\begin{lemma}
Let $\rho^{AB}$ be a partially-exchangeable state with a purification $\ket{\psi}^{ABC}$,
where $B$ can be decomposed, via an isometry $U:B \hookrightarrow A'B'$, into systems $A'$ and $B'$,
and the state $\proj{\varphi}^{AA'B'C}$ is invariant under the permutation of $A$ and $A'$.
Then, for $\rho^{AB}$, the entanglement of purification is $E_p(A:B){\rho} = S(A)_{\rho}$.
Thus, for these states, the entanglement of purification is additive, as
$E_p(A^n:B^n){\rho^{\ox n}} = nS(A)$ for any natural number $n$.
\end{lemma}
\begin{proof}
Let $\ket{\psi}^{ABC}$ be a purification of a partially-exchangeable state $\rho^{AB}$.  %
Also, let the isomtery $V:C \hookrightarrow C_AC_B$ be such that 
$E_p(A:B)=S(AC_A)=S(BC_B)$.
By definition, $B$ can be decomposed into $A'B'$, and $A$ and $A'$ are exchangeable.  
We use this to obtain the following lower bound
\begin{align}
2E_p(A:B)&=S(AC_A)+S(A'B'C_B)  \nonumber\\
&=S(A)+S(C_A|A)+S(A')+S(B'C_B|A')  \nonumber\\
&=2S(A)+S(C_A|A)+S(B'C_B|A)  \nonumber\\
&\geq 2S(A)+S(C_AC_B B'|A)  \nonumber\\
&= 2S(A)+S(C_AC_B B'A)-S(A)  \nonumber\\
&= 2S(A), 
\end{align}
where in the third line we use $S(A)_{\rho}=S(A')_{\rho}$.
The inequality is due to strong sub-additivity of the entropy.
Since $E_p(A:B)\leq S(A)$, hence we conclude that $E_p(A:B)=S(A)$.
Moreover, a tensor power state ${\rho^{\ox n}}$ is a partially-exchangeable state as well.
Therefore,  $E_p(A^n:B^n)_{\rho^{\ox n}}=S(A^n)=nS(A)$.
\end{proof}

\subsection{Pretty Strong Converse for compression of partially-exchangeable states}\label{sec: pretty}
In this section, we consider the asymptotic compression of $\rho^{AR}$ by assuming that it is 
a partially-exchangeable state, and via an isometry $R$ can be decomposed into $A'R'$, and the purified state $\ket{\varphi}^{AA'R'C}$ is invariant 
under the permutation of $A$ and $A'$ systems. 
The rate $S(A)_{\rho}$ can be achieved, in both blind and visible compression schemes, by applying Schumacher
compression on $A^n$.
Moreover, for both blind and visible compressions of these states, we prove that the compression rate  is lower bounded  by  $S(A)_{\rho}$ for any error $\epsilon < \frac{1}{3\sqrt{2}}$. Hence, $S(A)_{\rho}$ is the optimal compression rate. More precisely the following converse bound holds.
\begin{theorem}
For both blind and visible compression  of partially-exchangeable  states, 
for any error $\epsilon <\frac{1}{3\sqrt{2}}$, every integer $n$ and $\delta,\eta,\eta'>0$ and constant $\theta$
\begin{align}
\cM(n,\epsilon) &\geq nS({A})_{\rho} -3\log\frac{2}{\eta^2}- \frac{1}{2}\log\frac{2}{\eta'^2}  -\frac{1}{4}\log\frac{1}{\cos^2(2\alpha)}-\frac{\theta}{2}\sqrt{n \log \frac{2}{\delta}} \nonumber\\
&\geq nS({A})_{\rho} -O(\sqrt{n}),
\end{align}
where $\alpha=\sin^{-1}(3\epsilon+6\delta+6\eta+\eta')$.
\end{theorem}
\begin{proof}
Assuming that the code has the error $\epsilon $ in terms  of the  purified distance as $P(\rho^{A^nR^n},\xi^{\hat{A}^nR^n}) \leq \epsilon$, by Uhlmann's theorem
there is always an extension  $\rho^{A^nR^nV}$ of the source  satisfying $P(\rho^{A^nR^nV},\xi^{\hat{A}^nR^nV}) \leq \epsilon$.
This extension is obtained by applying an isometry $U^{C^n\to VV'}$ on a purification of the state as $\ket{\rho}^{A^nR^nVV'}=(\1_{A^nR^n}\ox U^{C^n\to VV'})\ket{\psi}^{A^nR^nC^n}$. Using this we obtain
\begin{align}\label{eq: smooth 1}
\cM(n,\epsilon)&\geq H_{\max}(M)_{\sigma} \nonumber\\
&= H_{\max}(\hat{A}^nV)_{\xi} \nonumber\\
&\geq H_{\max}^{\epsilon}({A}^nV)_{\rho} \nonumber\\
&  \geq H_{\min}^{\delta}({A}^n)_{\rho}+H_{\max}^{\epsilon+2\delta+2\eta}(V|{A}^n)_{\rho}-3 \log\frac{2}{\eta^2},
\end{align}
where the second line follows because applying the decoding isometry that does not change the max-entropy.
The third line is due to the definition of the smooth max-entropy and Uhlmann's theorem, as explained above. The last line is due the chain rule of Lemma~\ref{lemma: smooth chain rules},
which holds for any $\epsilon,\delta \geq 0$ and $\eta>0$.
From the third line of the above equations and that the state $\ket{\rho}^{A^n{A'}^n{R'}^nVV'}$ is pure we obtain
\begin{align}\label{eq: smooth 2}
\cM(n,\epsilon) &\geq H_{\max}^{\epsilon}({A}^nV)_{\rho} \nonumber\\
 &= H_{\max}^{\epsilon}({A'}^n{R'}^nV')_{\rho} \nonumber\\
&  \geq H_{\min}^{\delta}({A'}^n)_{\rho}+H_{\max}^{\epsilon+2\delta+2\eta}({R'}^nV'|{A'}^n)_{\rho}-3 \log\frac{2}{\eta^2}  \nonumber\\
&  = H_{\min}^{\delta}({A}^n)_{\rho}+H_{\max}^{\epsilon+2\delta+2\eta}({R'}^nV'|{A}^n)_{\rho}-3 \log\frac{2}{\eta^2},
\end{align}
where the inequality is due to the chain rule of Lemma~\ref{lemma: smooth chain rules}. We add  Eq.~(\ref{eq: smooth 1}) and Eq.~(\ref{eq: smooth 2}) and apply the second chain rule of Lemma~\ref{lemma: smooth chain rules},
for any $\epsilon,\delta \geq 0$ and $\eta,\eta'>0$  to obtain
\begin{align}
2\cM(n,\epsilon) & \geq 2H_{\min}^{\delta}({A}^n)_{\rho}+H_{\max}^{\epsilon+2\delta+2\eta}(V|{A}^n)_{\rho}
+H_{\max}^{\epsilon+2\delta+2\eta}({R'}^nV'|{A}^n)_{\rho} -6 \log\frac{2}{\eta^2}  \nonumber\\
&\geq 2H_{\min}^{\delta}({A}^n)_{\rho}
+H_{\max}^{3\epsilon+6\delta+6\eta+\eta'}({R'}^nV'V|{A}^n)_{\rho} -6 \log\frac{2}{\eta^2}- \log\frac{2}{\eta'^2}. 
\end{align}
We prove $H_{\max}^{3\epsilon+6\delta+6\eta+\eta'}({R'}^nV'V|{A}^n)_{\rho}\gtrapprox 0$.
Let $\sin \alpha=3\epsilon+6\delta+6\eta+\eta'$ and apply Lemma~\ref{lemma: H_min < H_max+terms} and duality in Eq.~(\ref{eq: smooth duality})
\begin{align}
H_{\max}^{\sin \alpha}({R'}^nV'V|{A}^n)_{\rho} &\geq H_{\min}^{\sin \alpha}({R'}^nV'V|{A}^n)_{\rho} -\log\frac{1}{\cos^2(2\alpha)} \nonumber\\
&= -H_{\max}^{\sin \alpha}({R'}^nV'V|{A}^n)_{\rho} -\log\frac{1}{\cos^2(2\alpha)}
\end{align}
This implies 
\begin{align}
H_{\max}^{3\epsilon+6\delta+6\eta+\eta'}({R'}^nV'V|{A}^n)_{\rho} &\geq  -\frac{1}{2}\log\frac{1}{\cos^2(2\alpha)}.
\end{align}
which holds for any $\alpha < \frac{\pi}{4}$. Hence,  the bound holds for $3\epsilon+6\delta+6\eta+\eta' <\frac{1}{\sqrt{2}}$. We can choose
$\delta, \eta,\eta'$ arbitrary small numbers, so the lower bound can hold for  $\epsilon <\frac{1}{3\sqrt{2}}$.
By applying AEP of Theorem~\ref{thm: AEP} we conclude 
\begin{align}
2\log \cM(n,\epsilon)& \geq 2H_{\min}^{\delta}({A}^n)_{\rho}+H_{\max}^{3\epsilon+6\delta+6\eta+\eta'}({R'}^nV'V|{A}^n)_{\rho} -6 \log\frac{2}{\eta^2}  \nonumber\\
&\geq 2H_{\min}^{\delta}({A}^n)_{\rho} -6 \log\frac{2}{\eta^2}- \log\frac{2}{\eta'^2}  -\frac{1}{2}\log\frac{1}{\cos^2(2\alpha)} \nonumber\\
&\geq 2nS({A})_{\rho} -6 \log\frac{2}{\eta^2}- \log\frac{2}{\eta'^2}  -\frac{1}{2}\log\frac{1}{\cos^2(2\alpha)}-\theta_{RC}\sqrt{n \log \frac{2}{\delta}}. \nonumber
\end{align}
\end{proof}

\begin{remark}
Consider the compression of a partially-exchangeable state $\rho^{AR}$ (again $A$ is the system to be compressed), where $A$ can be decomposed into $A'R'$, and the purified state $\ket{\psi}^{A'R'RC}$ is invariant 
under the permutation of $R$ and $R'$ systems. We can similarly prove that the compression rate  is lower bounded  by  $S(R)_{\rho}$ for any error $\epsilon < \frac{1}{3\sqrt{2}}$.
However, this rate  can be achieved only in the visible case, where the encoder has access to the purifying reference system $C^n$%
\end{remark}


\section{Visible compression case}\label{section: visible}
In this section, we consider the visible compression case. 
The encoding-decoding model is defined in Section~\ref{sec: compression model}, and we assume that
the states of the side information system satisfy $\ket{c_x}^C=\ket{x}^C$ for all $x$, that is the encoder has access to the identity of the states in the ensemble.
The optimal compression rate
for this source is equal to the regularized entanglement of purification $E_p^{\infty}(A:R)_{\rho}:=\lim_{n\to \infty} \frac{E_p(A^n:R^n)_{\rho^{\otimes n}}}{n}$ of the source \cite{visible_Hayashi}.
%
%
We define an $\alpha$-R\'enyi generalization of the entanglement of purification. 
We use this quantity to obtain a strong converse bound.
\begin{definition}\label{def: E_alpha_p}
For $\alpha \in(0,1)\cup (1,\infty)$, the $\alpha$-R\'enyi entanglement of purification of a bipartite state $\rho^{AR}$ is defined as
\begin{align}
    E_{\alpha,p}(A:R)_{\rho}:=\inf_{\cN:B\to E} S_{\alpha}(AE)_{\sigma}, \nonumber
\end{align}
where the infimum  is over all CPTP maps $\cN$  such that $\sigma^{AE}=(\id_A \otimes \cN)\rho^{AB}$
and $\rho^{AB}=\Tr_R \ketbra{\psi}^{ABR}$ for a purification $\ket{\psi}^{ABR}$ of $\rho^{AR}$. For $\alpha=1$, define $E_{1,p}(A:R)_{\rho}:=E_{p}(A:R)_{\rho}$.

The regularized $\alpha$-R\'enyi entanglement of purification is defined as:
\begin{align*}
    E_{\alpha,p}^{\infty}(A:R)_{\rho}:=\lim_{n\to \infty } \frac{E_{\alpha,p}(A^n:R^n)_{\rho^{\otimes n}}}{n},
\end{align*}
where the right hand side is evaluated with respect to the state $(\rho^{AR})^{\otimes n}$.
\end{definition}


\begin{lemma}
The dimension of the system $E$ in Definition~\ref{def: E_alpha_p} is upper bounded by $|A|^2\cdot|R|^2$. 
\end{lemma}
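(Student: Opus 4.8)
The plan is to show that the infimum in Definition~\ref{def: E_alpha_p} is attained by a channel whose output system $E$ has dimension at most $|A|^2|R|^2$, after which we may simply restrict $E$ to that dimension. First I would fix a \emph{minimal} purification $\ket{\psi}^{ABR}$ of $\rho^{AR}$, so that $|B|=\Rank(\rho^{AR})\le |A||R|$; since $\ket{\psi}^{ABR}$ is pure, the marginal $\rho^{B}=\Tr_{AR}\proj{\psi}$ has the same nonzero spectrum as $\rho^{AR}$, whence $\Rank(\rho^{B})\le|A||R|$ as well. Because only the action of $\cN$ on $\operatorname{supp}(\rho^{B})$ affects $\sigma^{AE}=(\id_A\otimes\cN)\rho^{AB}$, the effective input dimension of $\cN$ is at most $|A||R|$, and it remains only to bound the output dimension of an optimal $\cN$.

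The crux is to argue that the optimum is attained at an \emph{extremal} CPTP map for every $\alpha$, not only for $\alpha\in(0,1)$. The key observation is that minimizing $S_{\alpha}(AE)_{\sigma}$ is equivalent to optimizing $\Tr(\sigma^{AE})^{\alpha}$, which is a concave function of $\sigma^{AE}$ for $\alpha\in(0,1)$ and a convex function for $\alpha>1$. Since $\sigma^{AE}$ depends affinely on $\cN$ and the set of CPTP maps $\cN:B\to E$ (for any fixed $E$) is compact and convex, the map $\cN\mapsto \Tr(\sigma^{AE})^{\alpha}$ is concave for $\alpha\in(0,1)$ and convex for $\alpha>1$. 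For $\alpha\in(0,1)$, minimizing $S_{\alpha}$ amounts to minimizing this concave function, while for $\alpha>1$, where $\tfrac{1}{1-\alpha}<0$, it amounts to maximizing this convex function; in either case the optimum over a compact convex set is attained at an extreme point, i.e.\ at an extremal CPTP map. The case $\alpha=1$ follows from concavity of the von Neumann entropy. By Choi's theorem~\cite{Choi1975}, such an extremal map has at most $|B|\le |A||R|$ Kraus operators $\{K_i\}_{i=1}^{m}$.

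Finally I would bound the output rank. Writing $\sigma^{E}=\cN(\rho^{B})=\sum_{i=1}^{m}K_i\rho^{B} K_i^{\dagger}$ with $m\le|A||R|$ and $\Rank(\rho^{B})\le|A||R|$, subadditivity of rank together with $\Rank(K_i\rho^{B}K_i^{\dagger})\le\Rank(\rho^{B})$ gives $\Rank(\sigma^{E})\le m\,\Rank(\rho^{B})\le|A|^2|R|^2$. Since $\operatorname{supp}(\sigma^{AE})\subseteq \operatorname{supp}(\sigma^{A})\otimes\operatorname{supp}(\sigma^{E})$, compressing $E$ isometrically onto $\operatorname{supp}(\sigma^{E})$ leaves $S_{\alpha}(AE)_{\sigma}$ unchanged, so the optimal value is realized with $|E|\le|A|^2|R|^2$; enlarging $E$ beyond this cannot lower the objective, and the infimum is in fact a minimum over this bounded family.

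I expect the main obstacle to be precisely the $\alpha>1$ regime: the naive appeal to concavity of $S_{\alpha}$ that suffices for the analogous Lemma~\ref{lemma: bound E} fails here, so the argument must instead route through the convexity of $\sigma\mapsto\Tr\sigma^{\alpha}$ and the sign of $\tfrac{1}{1-\alpha}$ in order to land back on extremal maps. A secondary point to handle with care is the reduction to a minimal purification, which is what yields $|B|\le|A||R|$, and the justification that restricting to the output support does not alter the Rényi entropy of the joint state $\sigma^{AE}$.
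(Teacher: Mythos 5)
Your proof is correct and follows essentially the same route as the paper: split into the $\alpha>1$ case (where minimizing $S_{\alpha}$ becomes maximizing the convex functional $\Tr(\sigma^{AE})^{\alpha}$, attained at extremal CPTP maps) and the $\alpha\in(0,1)$ case (concavity), then invoke Choi's bound of at most $|B|$ Kraus operators for extremal maps to get $|E|\le|B|^2\le|A|^2|R|^2$. In fact you spell out two steps the paper leaves implicit — the minimal purification giving $|B|\le|A||R|$, and the output-rank/support-compression argument — so your write-up is, if anything, more complete.
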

\begin{proof}
For $\alpha>1$ we have
\begin{align}
   E_{\alpha,p}(A:R)_{\rho}&=\inf_{\cN:B\to E} S_{\alpha}(AE)_{\sigma}, \nonumber\\
   &=\frac{1}{\alpha-1} \inf_{\cN:B\to E} -\log \Tr (\sigma^{AE})^{\alpha} \nonumber\\
   &=-\frac{1}{\alpha-1}  \log  \sup_{\cN:B\to E}\Tr (\sigma^{AE})^{\alpha} \nonumber
\end{align}
the last equality is due to the fact that $-\log(\cdot)$ is a decreasing function.
The map $\rho \mapsto \Tr \rho^{\alpha}$ is convex~\cite{Lieb2013,Tomamichel_book}. Hence, the supremum  in the last line can be  attained by extremal CPTP maps
which are maps that cannot be expressed as convex combination of other CPTP maps. 
Moreover, extremal CPTP maps with input dimension $d$ have at most $d$ operators in Kraus 
representation \cite{Choi1975}. Therefore, the dimension of the output system $E$ is bounded by $|B|^2=|A|^2\cdot |R|^2$.

For  $\alpha\in (0,1)$, the R\'enyi entropy $S_{\alpha}(AE)_{\sigma}$ is a concave function of quantum states,  hence, the infimum  in the definition of $E_{\alpha,p}(A:R)_{\rho}$ is   obtained by extremal CPTP maps.
As discussed above, the dimension of the output system $E$ is bounded by $|B|^2=|A|^2\cdot |R|^2$.
\end{proof}

The dimension bound on system $E$ implies that infimum is attained in Definition~\ref{def: E_alpha_p} since the set of CPTP maps with bounded input and output dimension is compact. 
\begin{corollary}
The infimum in the definition of the $\alpha$-R\'enyi entanglement of purification is attainable.
\end{corollary}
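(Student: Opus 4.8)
The plan is to deduce the corollary directly from the dimension bound established in the preceding lemma: once the output dimension of the environment is capped, the infimum ranges over a \emph{compact} set of channels on which the objective is continuous, so by the extreme value theorem it is attained and is in fact a minimum.

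First I would fix the target Hilbert space $E$ with $|E| \leq |A|^2 \cdot |R|^2$, which is legitimate by the preceding lemma, since restricting to this bounded output dimension does not change the value of the infimum. With $B$ and $E$ of fixed finite dimension, I would parametrize the CPTP maps $\cN: B \to E$ through their Choi matrices via the Choi--Jamio{\l}kowski isomorphism: such maps are in bijection with the positive semidefinite operators $J_{\cN}$ on $E \otimes B$ obeying the trace-preservation constraint $\Tr_E J_{\cN} = \1_B$. This set is closed and bounded inside the finite-dimensional real vector space of Hermitian operators, hence compact.

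Next I would observe that the assignment $\cN \mapsto \sigma^{AE} = (\id_A \otimes \cN)\rho^{AB}$ is continuous (indeed linear in $J_{\cN}$, since applying a channel to the fixed input $\rho^{AB}$ is linear in the channel's Choi matrix), mapping the compact set of Choi matrices into the states on $A \otimes E$. Composing with the R\'enyi entropy, the objective $\cN \mapsto S_{\alpha}(AE)_{\sigma}$ is continuous: for $\alpha \in (0,1)\cup(1,\infty)$ the function $\sigma \mapsto \Tr \sigma^{\alpha}$ is continuous in $\sigma$ (the eigenvalues depend continuously on the operator, and $x \mapsto x^{\alpha}$ is continuous on $[0,\infty)$ for every $\alpha>0$), while $\frac{1}{1-\alpha}\log(\cdot)$ is continuous on the positive reals and $\Tr \sigma^{\alpha}>0$ for any nonzero state. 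Thus $S_{\alpha}(AE)_{\sigma}$ is a continuous function of $\cN$ on a compact domain.

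By the extreme value theorem the continuous objective attains its infimum over this compact set, so the infimum in the definition of $E_{\alpha,p}(A:R)_{\rho}$ is a minimum. The only point requiring care is the continuity of $S_{\alpha}$ at states with vanishing eigenvalues; I expect this to be the mildest of obstacles, since $\Tr \sigma^{\alpha}$ stays continuous there for all $\alpha>0$ and the logarithm is evaluated away from zero.
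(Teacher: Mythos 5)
Your proof is correct and takes essentially the same route the paper intends: the corollary is stated as an immediate consequence of the preceding dimension bound $|E|\leq |A|^2\cdot|R|^2$, with the standard compactness-plus-continuity argument left implicit, exactly as the paper does for $E_p$ itself. Your proposal simply supplies those details (compactness of the set of Choi matrices with fixed finite input and output dimensions, and continuity of $\sigma \mapsto S_{\alpha}(\sigma)$ for $\alpha \in (0,1)\cup(1,\infty)$, which is unproblematic even at rank-deficient states), so there is no gap and no genuine divergence from the paper's argument.
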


In the following lemma, we state some useful properties of the $\alpha$-R\'enyi entanglement 
of purification which we apply in the subsequent statements. 
Further properties of  this quantity are discussed in \cite{Renyi_E_p}.

\begin{lemma}\label{lemma: E_alpha_p properties}
The $\alpha$-R\'enyi entanglement of purification $E_{\alpha,p}(A:R)_{\rho}$ and its regularization $E_{\alpha,p}^{\infty}(A:R)_{\rho}$ have the following properties:
\begin{enumerate}[(i)]
    \item They are both decreasing with respect to $\alpha$ for $\alpha > 0$.
    \item $E_{\alpha,p}(A:R)_{\rho}$ is continuous with respect to $\alpha$ at any $\alpha\geq 1$.
    \item $E_{\alpha,p}^{\infty}(A:R)_{\rho}$ is continuous with respect to $\alpha$ at any $\alpha> 1$. 
\end{enumerate}
\end{lemma}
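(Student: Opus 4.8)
The plan is to handle the three items in increasing order of difficulty, using throughout that by the Corollary preceding this lemma the infimum in Definition~\ref{def: E_alpha_p} is attained, over CPTP maps $\cN:B\to E$ with $|E|$ bounded. Thus the optimizing $\sigma^{AE}$ ranges over density matrices on a \emph{fixed} finite-dimensional space for the single-copy quantity, and over a space of dimension $D_n\le |A|^{3n}|R|^{2n}$ for the $n$-copy quantity, so $\log_2 D_n\le nC$ with $C:=\log_2(|A|^3|R|^2)$. I would first record the subadditivity $E_{\alpha,p}(A^{n+m}:R^{n+m})_{\rho^{\otimes(n+m)}}\le E_{\alpha,p}(A^n:R^n)_{\rho^{\otimes n}}+E_{\alpha,p}(A^m:R^m)_{\rho^{\otimes m}}$, obtained by feeding a product purification into a tensor product of near-optimal maps and using additivity of $S_\alpha$ on product states; by Fekete's lemma this guarantees that the regularization exists and equals $\inf_n \tfrac1n E_{\alpha,p}(A^n:R^n)_{\rho^{\otimes n}}$, a fact I use in (i) and (iii).

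For (i) I would start from the elementary pointwise monotonicity of the R\'enyi entropy: for every fixed $\sigma$ the function $\alpha\mapsto S_\alpha(\sigma)$ is non-increasing (equivalently $\alpha\mapsto\log_2\Tr\sigma^\alpha$ is convex and vanishes at $\alpha=1$, so its secant slopes based at $\alpha=1$ increase). For $\alpha_1<\alpha_2$, evaluating at the map $\cN^*$ optimal for $\alpha_1$ gives $E_{\alpha_2,p}\le S_{\alpha_2}(\sigma_{\cN^*})\le S_{\alpha_1}(\sigma_{\cN^*})=E_{\alpha_1,p}$. Applying this to $\rho^{\otimes n}$ and passing to the infimum over $n$ transfers the monotonicity to $E_{\alpha,p}^\infty$.

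For (ii) the obstacle is the removable singularity of $S_\alpha$ at $\alpha=1$, where joint continuity in $(\alpha,\sigma)$ is not manifest. I would circumvent it with a uniform-convergence argument: fix $\alpha_0\ge 1$; for $\alpha$ near $\alpha_0$ the maps $\sigma\mapsto S_\alpha(\sigma)$ are continuous on the compact set of density matrices of the relevant fixed dimension, are monotone in $\alpha$, and converge pointwise to the continuous function $S_{\alpha_0}$ (the von Neumann entropy when $\alpha_0=1$). Dini's theorem then upgrades this to uniform convergence, $\sup_\sigma|S_\alpha(\sigma)-S_{\alpha_0}(\sigma)|\to 0$ as $\alpha\to\alpha_0$. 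Since $|\min_\cN f-\min_\cN g|\le \sup_\cN|f-g|$, this yields $|E_{\alpha,p}-E_{\alpha_0,p}|\le\sup_\sigma|S_\alpha(\sigma)-S_{\alpha_0}(\sigma)|\to 0$, i.e. continuity at every $\alpha_0\ge 1$.

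For (iii), which I expect to be the crux, continuity of the regularization is not automatic: an infimum over $n$ of continuous functions is only upper semicontinuous, which (together with monotonicity) already gives left-continuity, but right-continuity at $\alpha_0>1$ needs a bound uniform in $n$. I would secure it through a Lipschitz estimate with an $n$-independent constant. A direct computation gives $\frac{d}{d\alpha}S_\alpha(\sigma)=\frac{S_\alpha(\sigma)-S(\tilde\sigma_\alpha)}{\alpha(1-\alpha)}$, where $\tilde\sigma_\alpha$ is the escort (normalized $\alpha$-power) distribution of the spectrum of $\sigma$; since both entropies lie in $[0,\log_2 D]$ and $\alpha(\alpha-1)$ is bounded below by $c_0:=\inf_{\alpha\in I}\alpha(\alpha-1)>0$ on any compact $I\subset(1,\infty)$, one obtains $|\partial_\alpha S_\alpha(\sigma)|\le \log_2 D/c_0$ uniformly over states on a $D$-dimensional space. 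Applying this with $D=D_n$ and $\log_2 D_n\le nC$, the normalized functions $f_n(\alpha):=\tfrac1n E_{\alpha,p}(A^n:R^n)_{\rho^{\otimes n}}$ are all Lipschitz on $I$ with the \emph{same} constant $L=C/c_0$. An infimum of uniformly $L$-Lipschitz functions is $L$-Lipschitz, so $E_{\alpha,p}^\infty=\inf_n f_n$ is Lipschitz, hence continuous, on $I$, in particular at $\alpha_0$. The constant $L$ blows up as $I$ approaches $1$, which is precisely why both the argument and the statement are restricted to $\alpha>1$; the main difficulty is exactly securing this uniform-in-$n$ Lipschitz bound, for which keeping $\alpha$ bounded away from $1$ and exploiting that $\log_2 D_n$ grows only linearly in $n$ are both essential.
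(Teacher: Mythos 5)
Your proof is correct, and for the two substantive items (ii) and (iii) it takes a genuinely different route from the paper's. For (ii), the paper passes to $Z_{\alpha}(A:R)_{\rho}:=\max_{\cN}\log\Tr(\sigma^{AE})^{\alpha}$, shows this is convex and decreasing in $\alpha$ (invoking Lemma~\ref{lemma: g_m(alpha)}), hence continuous on $(1,\infty)$, and then disposes of the endpoint $\alpha\to 1^{+}$ with a brief remark about continuity of $S_\alpha$ and compactness of the feasible set; your Dini-theorem argument (uniform convergence of $S_{\alpha}$ to $S_{\alpha_0}$ over the compact set of output states, then $\abs{\min f-\min g}\le\sup\abs{f-g}$) turns that remark into a complete argument, and in fact yields two-sided continuity at $\alpha=1$, whereas the paper only addresses the limit from the right. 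For (iii), the paper again rides on convexity: each $\tfrac1n Z_{\alpha}(A^n:R^n)$ is convex and decreasing in $\alpha$, these properties pass to the pointwise limit $Z^{\infty}_{\alpha}$, and a convex function is continuous on an open interval --- so no uniform-in-$n$ dimension control is ever needed. You instead establish an $n$-independent Lipschitz constant on compact subsets of $(1,\infty)$ via the derivative identity $\partial_{\alpha}S_{\alpha}(\sigma)=\bigl(S_{\alpha}(\sigma)-S(\tilde{\sigma}_{\alpha})\bigr)/\bigl(\alpha(1-\alpha)\bigr)$ together with the fact that the dimension bound on $E$ in Definition~\ref{def: E_alpha_p} makes $\log D_n$ grow only linearly in $n$, and then use that an infimum of uniformly Lipschitz functions is Lipschitz. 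Your route is heavier machinery, but it buys two things the paper does not provide: an explicit modulus of continuity (Lipschitz on compacta, with constant degenerating as the interval approaches $1$, which cleanly explains the restriction to $\alpha>1$), and --- via subadditivity and Fekete's lemma --- a proof that the regularized limit actually exists, a point the paper takes for granted both here and in Theorem~\ref{thm: F_v<S_beta}. Your part (i) coincides with the paper's argument.
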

\begin{proof}
(i) The R\'enyi entropy $S_{\alpha}(AE)_{\sigma}$ is decreasing with $\alpha$, hence, for $\alpha'>\alpha$ we obtain
\begin{align}
    E_{\alpha,p}(A:R)_{\rho}&=\min_{\cN:B\to E} S_{\alpha}(AE)_{\sigma}\nonumber \\
    &\geq S_{\alpha'}(AE)_{\sigma_0}\nonumber \\
    &\geq \min_{\cN:B\to E} S_{\alpha'}(AE)_{\sigma}\nonumber \\
    &= E_{\alpha',p}(A:R)_{\rho}, \nonumber 
\end{align}
where in the second line $\sigma_0^{AE}=(\id_A \otimes \cN_0)\rho^{AB}$ for the optimal map
$\cN_0:B \to E$.
This implies that $E_{\alpha,p}^{\infty}(A:R)_{\rho}$ is decreasing with $\alpha$ as well.

\noindent  (ii) For a bipartite state $\rho^{AR}$ define $Z_{\alpha}(A:R)_{\rho}:=\max_{\cN:B\to E} \log \Tr (\sigma^{AE})^{\alpha}$ where $\sigma^{AE}=(\id_A \otimes \cN) \rho^{AB}$ and  $\rho^{AB}=\Tr_R \ketbra{\psi}^{ABR}$ for purification $\ket{\psi}^{ABR}$ of $\rho^{AR}$. We obtain 
\begin{align}
    E_{\alpha,p}(A:B)_{\rho}&=\min_{\cN:R\to E} S_{\alpha}(AE)_{\sigma}\nonumber \\
    &=\min_{\cN:R\to E} \frac{\log \Tr (\sigma^{AE})^{\alpha}}{1-\alpha}\nonumber \\
    &=\frac{1}{1-\alpha}\max_{\cN:R\to E} \log \Tr (\sigma^{AE})^{\alpha} \nonumber \\
    &=\frac{1}{1-\alpha}Z_{\alpha}(A:R)_{\rho}, \nonumber 
\end{align}
where the third follows for $\alpha>1$. In the following we show that $Z_{\alpha}(A:R)_{\rho}$ is a continuous function of $\alpha$ at any $\alpha>1$. It follows  that $E_{\alpha,p}(A:B)_{\rho}$ is a continuous function of $\alpha$ for $\alpha>1$ because $\frac{1}{1-\alpha}$ is continuous. To this end, we first show that $\alpha \mapsto Z_{\alpha}(A:R)_{\rho}$ is convex.
Let $\alpha=p \alpha_1+(1-p)\alpha_2$ for $p \in[0,1]$. We obtain
\begin{align}
    Z_{\alpha}(A:R)_{\rho}&= \log \Tr (\sigma^{AE})^{\alpha} \nonumber\\
    &= \log \Tr (\sigma^{AE})^{p\alpha_1+(1-p)\alpha_2} \nonumber\\
    &\leq p\log \Tr (\sigma^{AE})^{\alpha_1}+(1-p)\log \Tr (\sigma^{AE})^{\alpha_2} \nonumber\\
    &\leq p\max_{\cN:B\to E} \log \Tr (\sigma_1^{AE})^{\alpha_1}+(1-p) \max_{\cN:B\to E} \log \Tr (\sigma_2^{AE})^{\alpha_2} \nonumber\\
    &=pZ_{\alpha_1}(A:R)_{\rho}+(1-p)Z_{\alpha_2}(A:R)_{\rho}, \nonumber
\end{align}
where in the first line the state $\sigma^{AE}$ is the optimal state, and the third line follows from Lemma~\ref{lemma: g_m(alpha)}, i.e. $\alpha \mapsto \log \Tr (\sigma)^{\alpha}$ is convex. Hence, $Z_{\alpha}(A:R)_{\rho}$ is continuous at any $\alpha>1$.

To show the continuity from the right at $\alpha=1$, note that the R\'enyi entropy is continuous with $\alpha$, i.e. $\lim_{\alpha \to 1^+} S_{\alpha}(\rho)=S(\rho)$, as well as the quantum states. Also the minimization is over a compact set, therefore, 
we conclude it is also upper semi-continuous at $\alpha=1$, so they are continuous at $\alpha=1$ \cite[Thms.~10.1 and 10.2]{Rockafeller},
i.e. 
$\lim_{\alpha \to 1^+}E_{\alpha,p}(A:R)_{\rho}=E_{p}(A:R)_{\rho}$.

\noindent  (iii) In point (ii), we show that the function $\alpha \mapsto Z_{\alpha}(A:R)_{\rho}$ is   convex, therefore, its regularized version defined in Eq.~(\ref{eq: Z^infty}) is   convex therefore continuous at any $\alpha>1$
\begin{align}\label{eq: Z^infty}
     Z_{\alpha}^{\infty}(A:R)_{\rho}&:=\lim_{n\to \infty } \frac{Z_{\alpha}(A^n:R^n)_{\rho^{\otimes n}}}{n}. 
\end{align}
This implies that $E_{\alpha,p}^{\infty}(A:R)_{\rho}=\frac{1}{1-\alpha}  Z_{\alpha}^{\infty}(A:R)_{\rho}$ is continuous for $\alpha>1$.

\end{proof}

\medskip

To prove a strong converse bound, we use the following lemma where the fidelity between two states is upper bounded by the R\'enyi entropies of the corresponding states. This lemma, which was initially proved in~\cite{Dam2002}, was later generalized in \cite{Leditzky2016} to bound the fidelity by  conditional R\'enyi  entropies. 

\begin{lemma}[\cite{Dam2002}] \label{lemma: dam and Hayden}
For $\beta \in (\frac{1}{2},1)$ and any two states
$\rho$ and $\sigma$ on a finite-dimensional quantum system the following holds:
\begin{align*}
    S_{\beta}(\rho) \geq S_{\alpha}(\sigma)+\frac{2\beta}{1-\beta}F(\rho,\sigma),
\end{align*}
where $\alpha=\frac{\beta}{2 \beta -1}$.
\end{lemma}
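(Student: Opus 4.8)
The plan is to recognize the claimed bound as a logarithmic reformulation of H\"older's inequality for Schatten norms. Note first that the fidelity term must enter through $\log F(\rho,\sigma)$ rather than $F(\rho,\sigma)$ itself for the inequality to be consistent: when $\rho=\sigma$ it should reduce to the monotonicity $S_\beta\geq S_\alpha$ valid for $\beta<\alpha$, which forces the extra term to vanish at $F=1$. Accordingly I read the last term as $\frac{2\beta}{1-\beta}\log F(\rho,\sigma)$, and I would prove the statement in this form.

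First I would exponentiate the target inequality to clear the logarithms and entropies. Writing $S_\gamma(\tau)=\frac{1}{1-\gamma}\log\Tr\tau^\gamma$ and using $\alpha=\frac{\beta}{2\beta-1}$, a direct computation gives $1-\alpha=\frac{\beta-1}{2\beta-1}$, so the claim is equivalent, after multiplying by $(1-\beta)>0$ and exponentiating, to the single trace inequality
\begin{align*}
    F(\rho,\sigma)^{2\beta}\leq \Tr\rho^{\beta}\cdot\left(\Tr\sigma^{\alpha}\right)^{2\beta-1}.
\end{align*}

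Next I would produce this inequality from H\"older. Since $F(\rho,\sigma)=\|\sqrt{\rho}\sqrt{\sigma}\|_1$, H\"older's inequality for Schatten norms gives $\|\sqrt{\rho}\sqrt{\sigma}\|_1\leq\|\sqrt{\rho}\|_p\,\|\sqrt{\sigma}\|_q$ for conjugate exponents $1/p+1/q=1$. I would choose $p=2\beta$, which forces $q=\frac{2\beta}{2\beta-1}$; one checks that $\beta\in(\frac12,1)$ yields $p\in(1,2)$ and $q\in(2,\infty)$, so these are legitimate H\"older exponents. With this choice $\|\sqrt{\rho}\|_p=(\Tr\rho^{\beta})^{1/(2\beta)}$ and $\|\sqrt{\sigma}\|_q=(\Tr\sigma^{\alpha})^{(2\beta-1)/(2\beta)}$, since $q/2=\frac{\beta}{2\beta-1}=\alpha$. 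Raising the resulting bound to the power $2\beta$ gives exactly the displayed trace inequality, and reversing the first step (taking logarithms and dividing by the positive factor $(1-\beta)$) recovers the lemma.

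The computation is essentially routine; the only genuine care needed is the bookkeeping that identifies $\alpha$ as half the H\"older conjugate of $2\beta$, the verification that the exponents lie in the admissible range, and tracking signs, namely that $(1-\beta)>0$ (so dividing by it preserves the inequality direction) while $(1-\alpha)<0$. No deeper obstacle arises, which is why the statement can simply be imported from \cite{Dam2002}; the substance of the present work lies in applying it, not in reproving it.
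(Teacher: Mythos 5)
Your proof is correct. There is actually no in-paper argument to compare against: the paper imports this lemma verbatim from \cite{Dam2002}, so the only relevant comparison is with that reference, and your Schatten--H\"older derivation is essentially the standard van Dam--Hayden proof. Your preliminary correction of the statement is also right: the last term must be $\frac{2\beta}{1-\beta}\log F(\rho,\sigma)$, since with $F$ in place of $\log F$ the claim already fails for $\rho=\sigma$ pure, and $\log F$ is exactly how the lemma is invoked in Eq.~(\ref{eq:2 nQ_b bound}) and Eq.~(\ref{eq: Q_v}). The bookkeeping checks out: $1-\alpha=\frac{\beta-1}{2\beta-1}$ gives $\frac{1-\beta}{1-\alpha}=-(2\beta-1)$, so multiplying by $1-\beta>0$ and exponentiating turns the entropic claim into $F(\rho,\sigma)^{2\beta}\le \Tr\rho^{\beta}\,\left(\Tr\sigma^{\alpha}\right)^{2\beta-1}$, and H\"older with conjugate exponents $p=2\beta\in(1,2)$, $q=\frac{2\beta}{2\beta-1}\in(2,\infty)$, together with $q/2=\alpha$ and the paper's convention $F(\rho,\sigma)=\|\sqrt{\rho}\sqrt{\sigma}\|_1$, yields precisely that trace inequality.
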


\begin{theorem}\label{thm: F_v<S_beta} 
For any $\alpha>1$, the fidelity for the visible compression of mixed states with the rate $\cQ_v(n,\epsilon)$ is bounded as
\begin{align}
    \overline{F} \leq 2^{-n\frac{\alpha-1}{2\alpha}(E_{\alpha,p}^{\infty}(A:R)_{\rho}-\cQ_v(n,\epsilon))}. \nonumber
\end{align}
\end{theorem}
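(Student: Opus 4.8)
The plan is to follow the proof of Theorem~\ref{thm: F_b<S_beta} step by step, replacing the Koashi--Imoto structure by the entanglement-of-purification structure. Fix $\beta\in(\tfrac12,1)$ and set $\alpha=\tfrac{\beta}{2\beta-1}$, so that $\alpha$ sweeps all of $(1,\infty)$ as $\beta$ ranges over $(\tfrac12,1)$, covering every $\alpha>1$ in the statement. As in the blind case I would first bound the message size from below by a R\'enyi entropy and use that the decoding isometry $V_{\cD}:M\hookrightarrow \hat{A}^n V$ leaves R\'enyi entropies unchanged:
\[
  nQ_v=\log|M|\ \ge\ S_\beta(M)=S_\beta(\hat{A}^n V)_\xi .
\]

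Next I would read $\hat{A}^n V$ as an admissible state in the entanglement of purification of the \emph{decoded} state. Dilating both maps to isometries and purifying the source yields a global pure state $\ket{\Theta}^{\hat{A}^n V W B^n R^n}$ whose reduction on $\hat{A}^n R^n$ is $\xi^{\hat{A}^n R^n}$; here $B^n$ is the purifier of the source and $V,W$ are the decoder's and encoder's environments. Tracing out $WB^n$ is then one admissible channel on the purifier in Definition~\ref{def: E_alpha_p} applied to $\xi^{\hat{A}^n R^n}$, so $S_\beta(\hat{A}^n V)_\xi\ge E_{\beta,p}(\hat{A}^n:R^n)_\xi$. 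Let $\nu^{*}$ on $\hat{A}^n E$ denote the attained minimiser. I would then apply Lemma~\ref{lemma: dam and Hayden} to $\nu^{*}$ (as the $S_\beta$ state) together with a state $\tau^{A^n E}=(\id_{A^n}\otimes\cN_\tau)(\rho^{AB})^{\otimes n}$ obtained by processing the \emph{source} purifier $B^n$; being an admissible candidate for $E_{\alpha,p}(A^n:R^n)_\rho$ it satisfies $S_\alpha(A^n E)_\tau\ge E_{\alpha,p}(A^n:R^n)_\rho$. This produces
\[
  E_{\beta,p}(\hat{A}^n:R^n)_\xi\ \ge\ E_{\alpha,p}(A^n:R^n)_\rho+\frac{2\beta}{1-\beta}\log F(\tau,\nu^{*}).
\]

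The one remaining freedom is the processing $\cN_\tau$, which I would fix by Uhlmann's theorem, exactly as Lemma~\ref{lemma: E fidelity} is used in the blind case. Since $\nu^{*}$ and $\tau$ are reductions of purifications of $\xi^{\hat{A}^n R^n}$ and $\rho^{A^n R^n}$ respectively, there is an isometry $B^n\hookrightarrow EF$ whose induced purification of $\rho^{A^n R^n}$ has overlap $F(\rho^{A^n R^n},\xi^{\hat{A}^n R^n})=\overline F_v$ with the fixed purification of $\nu^{*}$; monotonicity of fidelity under the partial trace onto $A^n E$ then gives $F(\tau,\nu^{*})\ge\overline F_v$. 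Chaining the three displays yields $nQ_v\ge E_{\alpha,p}(A^n:R^n)_\rho+\tfrac{2\beta}{1-\beta}\log\overline F_v$. Because $m\mapsto E_{\alpha,p}(A^m:R^m)_{\rho^{\otimes m}}$ is subadditive (use product channels and additivity of $S_\alpha$ on tensor products), Fekete's lemma gives $E_{\alpha,p}(A^n:R^n)_\rho\ge n\,E_{\alpha,p}^{\infty}(A:R)_\rho$; solving for $\overline F_v$ and using $\tfrac{1-\beta}{2\beta}=\tfrac{\alpha-1}{2\alpha}$ delivers the stated bound.

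I expect the crux to be the third step: producing a single state $\tau$ that is simultaneously a genuine entanglement-of-purification candidate for $\rho$ (so that $S_\alpha(A^n E)_\tau\ge E_{\alpha,p}(A^n:R^n)_\rho$) and Uhlmann-aligned with $\nu^{*}$ (so that $F(\tau,\nu^{*})\ge\overline F_v$). Making this precise means matching the purifying systems and the output dimension $E$ of the two constructions, enlarging $F$ as needed, and checking that fidelity monotonicity is applied on the correct tensor factors; this is the visible-case counterpart of the KI-structured extension $\tau$ built in the proof of Theorem~\ref{thm: F_b<S_beta}.
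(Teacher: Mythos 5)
Your proposal is correct, and its skeleton is the paper's: lower-bound $nQ_v \geq S_\beta(M) = S_\beta(\hat{A}^n V)_\xi$, apply Lemma~\ref{lemma: dam and Hayden} with $\alpha=\frac{\beta}{2\beta-1}$ against an Uhlmann-aligned comparison state that is admissible for $E_{\alpha,p}(A^n:R^n)_{\rho^{\otimes n}}$, then use subadditivity (product channels plus additivity of $S_\alpha$ on tensor products) to pass to $E_{\alpha,p}^{\infty}(A:R)_\rho$ and rearrange with $\frac{1-\beta}{2\beta}=\frac{\alpha-1}{2\alpha}$. Where you genuinely diverge is in the execution of the crux you correctly identify. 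The paper never introduces $E_{\beta,p}(\hat{A}^n:R^n)_\xi$ or a minimizer $\nu^{*}$: it applies the lemma directly to the pair $\bigl(\xi^{\hat{A}^n V},\tau^{A^n V}\bigr)$, where $\tau^{A^nV}=\sum_{x^n}p(x^n)\tau_{x^n}^{A^nV}$ is built by a \emph{separate} Uhlmann unitary on $VV'$ for each label $x^n$; monotonicity under tracing out $V'$ and concavity of fidelity then give $F(\tau^{A^nV},\xi^{\hat{A}^nV})\geq \sum_{x^n}p(x^n)F\bigl(\rho_{x^n}^{A^n},\xi_{x^n}^{\hat{A}^n}\bigr)=\overline{F}_v$, and admissibility holds because the whole family $\{\tau_{x^n}\}$ is generated by one channel $\Lambda:{R'}^n{R''}^n\to V$ on the source purifier, conditioned on the classical label. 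Your route instead performs a single global Uhlmann alignment between purifications of $\rho^{A^nR^n}$ and $\xi^{\hat{A}^nR^n}$, using only the identity $\overline{F}_v=F\bigl(\rho^{A^nR^n},\xi^{\hat{A}^nR^n}\bigr)$ of Eq.~(\ref{eq:fidelity criterion}); this buys you an argument that never exploits the ensemble (classical-quantum) structure of the visible source, so verbatim it bounds any protocol for a general source $\rho^{AR}$, at the price of the purifier dimension-matching/padding technicalities you flag (which are standard and resolvable). Two smaller remarks: your intermediate step $S_\beta(\hat{A}^nV)_\xi\geq E_{\beta,p}(\hat{A}^n:R^n)_\xi$ followed by passing to the attained minimizer $\nu^{*}$ is correct (attainability is the paper's Corollary~\ref{theorem}.7-type statement) but redundant — you can Uhlmann-align $\tau$ directly to the global purification of $\xi^{\hat{A}^nR^n}$ on $\hat{A}^nVW B^nR^n$ and apply the lemma to $\bigl(\xi^{\hat{A}^nV},\tau^{A^nV}\bigr)$, which is exactly what the paper does; and your Fekete-lemma appeal matches the paper's argument that $mE_{\alpha,p}(A^n:R^n)_{\rho^{\otimes n}}\geq E_{\alpha,p}(A^{nm}:R^{nm})_{\rho^{\otimes nm}}$ forces $\frac{1}{n}E_{\alpha,p}(A^n:R^n)_{\rho^{\otimes n}}\geq E_{\alpha,p}^{\infty}(A:R)_\rho$.
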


\begin{proof}
We bound the compression rate for $\beta \in (\frac{1}{2},1)$ as follows:
\begin{align} 
   n \cQ_v(n,\epsilon)&\geq S_{\beta}(M) \nonumber\\
        & = S_{\beta}(\hat{A}^nV)_{\xi} \nonumber\\
        &\geq S_{\alpha}(A^n V)_{\tau}+\frac{2\beta}{1-\beta} \log F(\tau^{A^n V },\xi^{\hat{A}^n V}) \nonumber\\
        &\geq S_{\alpha}(A^n V)_{\tau}+\frac{2\beta}{1-\beta} \log \overline{F} \label{eq: Q_v}
\end{align}
where the  equality  follows because the decoding isometry does not change the R\'enyi entropy.
The second inequality follows from Lemma~\ref{lemma: dam and Hayden} for $\alpha=\frac{\beta}{2\beta -1}$. This lemma holds for  arbitrary states $\tau$ and $\xi$. Here we let $\tau^{A^n V}=\sum_{x^n}p(x^n) \tau_{x^n}^{A^nV}$
where for all $x^n$ the state $\tau_{x^n}^{A^nV}$ is any extension of the state $\rho_{x^n}^{A^n}$ such that $F(\tau^{A^nV},\xi^{\hat{A}^nV})\geq \overline{F}$.
We can construct $\tau_{x^n}^{A^nV}$ as follows: let $\ket{\xi_{x^n}}^{\hat{A}^nV V'}$ and 
let $\ket{\psi_{x^n}}^{A^nV V'}$ be purifications of $\xi_{x^n}^{\hat{A}^nV}$ and $\rho_{x^n}^{A^n}$, respectively.
By Uhlmann's theorem, for any $x^n$ there is a unitary $U_{x^n}:VV'\to VV'$ such that 
\begin{align}
    F(\rho_{x^n}^{A^n},\xi_{x^n}^{\hat{A}^n})=F(\underbrace{(\1_{A^n} \otimes U_{x^n})\psi_{x^n}^{A^nV V'}(\1_{A^n} \otimes U_{x^n})^{\dagger}}_{\tau_{x^n}^{A^nVV'}},\xi_{x^n}^{\hat{A}^nV V'})
    \leq F(\tau_{x^n}^{A^nV},\xi_{x^n}^{\hat{A}^nV}), \nonumber
\end{align}
where the inequality  follows from the monotonicity of the fidelity under partial trace on system $V'$. By taking the average on the both sides of the above inequality, we obtain 
\begin{align}
   \overline{F}&=\sum_{x^n}p(x^n) F(\rho_{x^n}^{A^n},\xi_{x^n}^{\hat{A}^n}) \nonumber \\
   &\leq \sum_{x^n}p(x^n) F(\tau_{x^n}^{A^nV},\xi_{x^n}^{\hat{A}^nV}). \nonumber\\
   &\leq  F(\tau^{A^nV},\xi^{\hat{A}^nV}), \nonumber
\end{align}
where the last inequality follows from concavity of the fidelity.

Next, we bound the R\'enyi entropy $S_{\alpha}(A^n V)_{\tau}$ in Eq.~(\ref{eq: Q_v}) where we defined the state  $\tau_{x^n}^{A^nV}$ as  an extension of the state~$\rho_{x^n}^{A^n}$.
Consider a purification  $\ket{\psi}^{A^n {R'}^n {R''}^n R^n}=\sum_{x^n} \sqrt{p(x^n)} \ket{\psi_{x^n}}^{A^n {R'}^n}\ket{x^n}^{{R''}^n}\ket{x^n}^{R^n}$ of the source $\rho^{A^n R^n}$.
Then, any extended state $\tau^{A^nV {R''}^n}=\sum_{x^n}p(x^n) \tau_{x^n}^{A^nV} \otimes \ketbra{x^n}^{{R''}^n}$ is obtained by applying a CPTP map $\Lambda: {R'}^n {R''}^n\to V$ as follows:
$\tau^{A^nV}=(\id_{A^n}\otimes \Lambda)\rho^{A^n {R'}^n {R''}^n}$ where $\rho^{A^n {R'}^n {R''}^n} =\Tr_{R^n}\ketbra{\psi}^{A^n {R'}^n {R''}^n R^n}$.
Hence, we obtain the  inequality 
\begin{align} 
       S_{\alpha}(A^n V)_{\tau}
        &\geq \mathop{\min_{ \Lambda: {R'}^n {R''}^n \to E}} 
        S_{\alpha}(A^n E)_{\nu} \nonumber\\
        &= E_{p,\alpha}(A^n:R^n)_{\rho^{\otimes n}}, \label{eq: S_beta > E_p_beta}
\end{align}
where the entropy in the minimization is with respect to the state $\nu^{A^nE}:=(\id_{A^n}\otimes \Lambda)\rho^{A^n {R'}^n {R''}^n}$.
The second line is due to the definition of the R\'enyi entanglement of purification
for the state $({\rho}^{A R})^{\otimes n}$.

From Eq.~(\ref{eq: Q_v}) and Eq.~(\ref{eq: S_beta > E_p_beta}), we obtain
\begin{align} 
    n\cQ_v(n,\epsilon)&\geq \frac{1}{n}E_{\alpha,p}(A^n:R^n)_{\rho^{\otimes n}}+\frac{2\beta}{(1-\beta)n} \log \overline{F}  \nonumber \\
     &\geq E_{\alpha,p}^{\infty}(A:R)_{\rho}+\frac{2\beta}{(1-\beta)n} \log \overline{F}, \label{eq: Q-v>F-v}
\end{align}
where the second inequality follows because  the definition of the R\'enyi entanglement of purification implies that  $mE_{\alpha,p}(A^n:R^n)_{\rho^{\otimes n}}\geq E_{\alpha,p}(A^{nm}:R^{nm})_{\rho^{\otimes nm}}$ holds for any $m$. Hence, we have
\begin{align}
    \frac{1}{n}E_{\alpha,p}(A^n:R^n)_{\rho^{\otimes n}}\geq \lim_{m\to \infty}\frac{1}{nm}E_{\alpha,p}(A^{nm}:R^{nm})_{\rho^{\otimes nm}}=E_{\alpha,p}^{\infty}(A:R)_{\rho}, \nonumber
\end{align}
where the equality follows because every subsequence of a convergent sequence converges to the same limit as the original sequence. The theorem follows by rearranging the terms in Eq.~(\ref{eq: Q-v>F-v}).
\end{proof}

In Lemma~\ref{lemma: E_alpha_p properties}, we show that ${E}_{\alpha,p}^{\infty}(A:R)_{\rho}$ is continuous for $\alpha>1$, however, it is not known if it is continuous at $\alpha=1$. 
We prove that for any rate below $\lim_{\alpha \to 1^+}E_{\alpha,p}^{\infty}(A:R)_{\rho}$ the fidelity converges to zero.

\begin{proposition}\label{proposition: F_v}
For any rate $\cQ_v(n,\epsilon)< \lim_{\alpha \to 1^+}E_{\alpha,p}^{\infty}(A:R)_{\rho}$, there is a $K>0$ such that the following  bound holds for the visible compression of mixed states: 
\begin{align}
    \overline{F} \leq 2^{-nK}. \nonumber
\end{align}
\end{proposition}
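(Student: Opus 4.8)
The plan is to mirror the argument of Proposition~\ref{proposition: F_b} and simply extract a strictly positive exponent from the fidelity bound of Theorem~\ref{thm: F_v<S_beta}. That theorem gives, for every $\alpha>1$,
\begin{align}
    \overline{F}_v \leq 2^{-n\frac{\alpha-1}{2\alpha}\left(E_{\alpha,p}^{\infty}(A:R)_{\rho}-Q_v\right)}. \nonumber
\end{align}
Since the prefactor $\frac{\alpha-1}{2\alpha}$ is already positive for every $\alpha>1$, the whole exponent is positive precisely when $E_{\alpha,p}^{\infty}(A:R)_{\rho}>Q_v$. Hence it suffices to produce a single value $\alpha>1$ at which $E_{\alpha,p}^{\infty}(A:R)_{\rho}$ still exceeds the given rate $Q_v$.

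Next I would invoke monotonicity. By Lemma~\ref{lemma: E_alpha_p properties}(i) the regularized quantity $E_{\alpha,p}^{\infty}(A:R)_{\rho}$ is decreasing in $\alpha$, so as $\alpha\to 1^+$ it increases towards its limit; consequently
\begin{align}
    \lim_{\alpha\to 1^+}E_{\alpha,p}^{\infty}(A:R)_{\rho}=\sup_{\alpha>1}E_{\alpha,p}^{\infty}(A:R)_{\rho}. \nonumber
\end{align}
Therefore, for any $Q_v$ strictly below this limit, the definition of the supremum guarantees some $\alpha_0>1$ with $E_{\alpha_0,p}^{\infty}(A:R)_{\rho}>Q_v$. (Continuity of $\alpha\mapsto E_{\alpha,p}^{\infty}(A:R)_{\rho}$ for $\alpha>1$, Lemma~\ref{lemma: E_alpha_p properties}(iii), is reassuring here, but monotonicity alone already yields the required $\alpha_0$.)

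Finally I would set $K:=\frac{\alpha_0-1}{2\alpha_0}\left(E_{\alpha_0,p}^{\infty}(A:R)_{\rho}-Q_v\right)$, which is strictly positive by the choice of $\alpha_0$, and read off $\overline{F}_v\leq 2^{-nK}$ directly from Theorem~\ref{thm: F_v<S_beta}. I do not expect a genuine obstacle in this last step: all the analytic work lives in Theorem~\ref{thm: F_v<S_beta}, and here one only converts its $\alpha$-parametrised bound into a uniform exponential decay by a suitable choice of $\alpha$. The one conceptual subtlety worth flagging is that, unlike the blind case of Proposition~\ref{proposition: F_b} where $\lim_{\alpha\to 1^+}S_\alpha(CQ)=S(CQ)$ coincides with the true optimal rate, here the threshold is $\lim_{\alpha\to 1^+}E_{\alpha,p}^{\infty}(A:R)_{\rho}$ rather than the operational rate $E_p^{\infty}(A:R)_{\rho}$; upgrading this proposition to a full strong converse would require the continuity identity $\lim_{\alpha\to 1^+}E_{\alpha,p}^{\infty}(A:R)_{\rho}=E_p^{\infty}(A:R)_{\rho}$, which lies outside the present statement and remains open.
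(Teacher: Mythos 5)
Your proof is correct and takes essentially the same route as the paper's: both pick an $\alpha>1$ with $E_{\alpha,p}^{\infty}(A:R)_{\rho}>Q_v$ using the properties in Lemma~\ref{lemma: E_alpha_p properties} and then read off $K:=\frac{\alpha-1}{2\alpha}\left(E_{\alpha,p}^{\infty}(A:R)_{\rho}-Q_v\right)>0$ from Theorem~\ref{thm: F_v<S_beta}. Your remark that monotonicity alone (via $\lim_{\alpha\to 1^+}=\sup_{\alpha>1}$) already yields the required $\alpha$, without needing the continuity statement the paper also cites, is a minor but valid streamlining of the same argument.
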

\begin{proof}
From Lemma~\ref{lemma: E_alpha_p properties} we know that the regularized R\'enyi entanglement of purification $E_{\alpha,p}^{\infty}(A:R)_{\rho}$ is continuous and decreasing with respect to $\alpha$ for $\alpha > 1$.
Therefore, for any $\cQ_v(n,\epsilon)< \lim_{\alpha \to 1^+}E_{\alpha,p}^{\infty}(A:R)_{\rho}$, there is a $\alpha>1$ such that $\cQ_v(n,\epsilon)<{E}_{\alpha,p}^{\infty}(A:R)_{\rho} < \lim_{\alpha \to 1^+}E_{\alpha,p}^{\infty}(A:R)_{\rho}$. Hence, the claim follows from Theorem~\ref{thm: F_v<S_beta} by letting
$K:=\frac{\alpha-1}{2\alpha}({E}_{\alpha,p}^{\infty}(A:R)_{\rho}-\cQ(n,\epsilon))$.
\end{proof}
\begin{remark}
If $\lim_{\alpha \to 1^+}E_{\alpha,p}^{\infty}(A:R)_{\rho}=E_{p}^{\infty}(A:R)_{\rho}$, then the above
bound is the strong converse for the visible compression of mixed states.
\end{remark}

\begin{remark}
The strong converse bound of the visible compression  holds for the blind compression as well, since the bound is obtained assuming that the encoder has access to a side information system. Namely, for any CPTP decoder and any rate $\cQ_b(n,\epsilon) < \lim_{\alpha \to 1^+}E_{\alpha,p}^{\infty}(A:R)_{\rho}$, there is a $K>0$ such that the following  bound holds for the visible compression of mixed states: 
\begin{align}
    \overline{F} \leq 2^{-nK}. \nonumber
\end{align}
For rates $\lim_{\alpha \to 1^+}E_{\alpha,p}^{\infty}(A:R)_{\rho} \leq \cQ_b(n,\epsilon) < S(CQ)_{\omega}$,
we obtain the above bound by imposing the assumption that the decoder is an isometry.

\end{remark}
\section{Blind compression case }\label{section: blind}
In this section, we consider  the compression of a general
mixed state source $\rho^{AR}$ where  $A$ is the system to be compressed, and  $R$
is the reference system. The ensemble source $\{p(x),\rho_x^A\}$ considered  by Koashi and Imoto in \cite{KI2001,KI2002} is a special case of $\rho^{AR}$ where the reference is a classical system
as follows:
\begin{align}\label{eq: KI source}
    \rho^{AR}=\sum_x p(x) \rho_x^{A} \otimes \ketbra{x}^R, 
\end{align}
and $\{\ket{x}^R\}_x$ is  the orthonormal basis for the classical system $R$. To be more specific about the role of the reference system, let the CPTP map $\cN: A \to \hat{A}$ denote the combined action of the encoder and the decoder. Then,
preserving the average fidelity, as considered by Koashi and Imoto, is equivalent to preserving the classical-quantum state $\rho^{AR}$ in Eq.~(\ref{eq: KI source}):
\begin{align}
    \sum_x p(x) F(\rho_x^{A}, \cN (\rho_x^{A}))=F(\rho^{AR},(\cN \otimes \id_R)\rho^{AR}). \nonumber
\end{align}
In this section, we consider a general mixed-state source $\rho^{AR}$, which is not necessarily a classical-quantum state as in~Eq.~(\ref{eq: KI source}), and prove a strong converse bound for it. This immediately implies a strong converse for the  source defined in Eq.~(\ref{eq: KI source}).

%

\medskip
As shown in \cite{general_mixed_state_compression,ZK_mixed_state_ISIT_2020,ZBK_PhD}, the optimal compression rate of a general mixed-state source $\rho^{AR}$ is equal to $S(CQ)_{\omega}$, i.e the von Neumann entropy of the classical and quantum systems in a decomposition 
of this state introduced in~\cite{Hayden2004}, which is a generalization of the decomposition
introduced by Koashi and Imoto in~\cite{KI2002}. Namely, for any set of quantum states $\{ \rho_x^A\}$, 
there is a unique decomposition of the Hilbert space describing
the structure of CPTP maps which preserve the set $\{ \rho_x^A\}$. This idea was generalized 
in \cite{Hayden2004} for a general mixed state $\rho^{AR}$ describing the structure of 
CPTP maps acting on system $A$ which preserve the overall state $\rho^{AR}$. 
This was achieved by showing that any such map preserves the set of all possible
states on system $A$ which can be obtained by measuring system $R$, and 
conversely any map preserving the set of all possible
states on system $A$ obtained by measuring system $R$, preserves the state $\rho^{AR}$,
thus reducing the general case to the case of classical-quantum states 
\begin{align*}
    \rho^{AY} = \sum_y q(y) \rho_y^A \otimes \proj{y}^Y = \sum_y \Tr_R \rho^{AR}(\1_A\otimes M_y^R) \otimes \proj{y}^Y, 
\end{align*}
which is the ensemble case considered by Koashi and Imoto. 
%
%
The properties of this decomposition are stated in the following theorem.

\begin{theorem}[\cite{KI2002,Hayden2004}]
\label{thm: KI decomposition}
Associated to the state $\rho^{AR}$, there are Hilbert spaces $C$, $N$ and $Q$
and an isometry $U_{\KI}:A \hookrightarrow C N Q$ such that:
\begin{enumerate}[(i)]
  \item The state $\rho^{AR}$ is transformed by $U_{\KI}$ as
    \begin{align}
        \label{eq:KI state}
      (U_{\KI}\!\otimes\! \1_R)\rho^{AR} (U_{\KI}^{\dagger}\!\otimes\! \1_R)
        \!\!= \!\!\sum_j p_j \proj{j}^{C}\!\! \otimes \omega_j^{N} \otimes \rho_j^{Q R} 
        =:\omega^{C N Q R},
    \end{align}
    where the set of vectors $\{ \ket{j}^{C}\}$ form an orthonormal basis for Hilbert space 
    $C$, and $p_j$ is a probability distribution over $j$. The states $\omega_j^{N}$ and 
    $\rho_j^{Q R}$ act  on the Hilbert spaces $N$ and $Q \otimes R$, respectively.

  \item For any CPTP map $\Lambda$ acting on system $A$ which leaves the state $\rho^{AR}$ 
    invariant, that is $(\Lambda \otimes \id_R )\rho^{AR}=\rho^{AR}$, every associated 
    isometric extension $U: A\hookrightarrow A E$ of $\Lambda$ with the environment system 
    $E$ is of the following form
    \begin{equation}
      U = (U_{\KI}\otimes \1_E)^{\dagger}
            \left( \sum_j \proj{j}^{C} \otimes U_j^{N} \otimes \1^{Q} \right) U_{\KI},
    \end{equation}
    where the isometries $U_j:N \hookrightarrow N E$ satisfy 
    $\Tr_E [U_j \omega_j U_j^{\dagger}]=\omega_j$ for all $j$.
    The isometry $U_{KI}$ is unique (up to a trivial change of basis of the Hilbert spaces 
    $C$, $N$ and $Q$). Henceforth, we call the isometry $U_{\KI}$ and the state 
    $\omega^{C N Q R}$ defined by Eq.~(\ref{eq:KI state})
    the Koashi-Imoto (KI) isometry and KI-decomposition of the state $\rho^{AR}$, respectively.

  \item In the particular case of a tripartite system $CNQ$ and a state $\omega^{CNQR}$ already 
    in Koashi-Imoto form (\ref{eq:KI state}), property 2 says the following:
    For any CPTP map $\Lambda$ acting on systems $CNQ$ with 
    $(\Lambda \otimes \id_R )\omega^{CNQR}=\omega^{CNQR}$, every associated 
    isometric extension $U: CNQ\hookrightarrow CNQ E$ of $\Lambda$ with the environment system 
    $E$ is of the form
    \begin{equation}
      U = \sum_j \proj{j}^{C} \otimes U_j^{N} \otimes \1^{Q},
    \end{equation}
    where the isometries $U_j:N \hookrightarrow N E$ satisfy 
    $\Tr_E [U_j \omega_j U_j^{\dagger}]=\omega_j$ for all $j$.
\end{enumerate} 
\end{theorem}
In the following theorem, we  obtain a bound on the fidelity for the compression of a $\delta$-typical state $\omega^{C^nQ^nR^n}_{\delta}$, which is a Koashi-Imoto decomposition of $\rho^{A^nR^n}$. 
The Koashi-Imoto decomposition of $\rho^{\otimes n}$ is equal to the tensor product of Koashi-Imoto states i.e., $\omega^{\otimes n}$ \cite{KI2001,KI2002}. Consider the following state which is obtained by applying the Koashi-Imoto isometry 
on the decoded system $\hat{A}^n$:
\begin{align}\label{eq: state zeta}
   \zeta^{\hat{C}^n \hat{N}^n \hat{Q}^n  R^n}:= (U_{\KI}^{\otimes n}\otimes \1_{R^n}) \xi^{\hat{A}^n  R^n} (U_{\KI}^{\otimes n}\otimes \1_{R^n})^{\dagger}.
\end{align}
For a $\delta$-typical state $\omega^{C^nQ^n}_{\delta}=\Pi_{\delta} \omega^{C^nQ^n} \Pi_{\delta}$, the operator norm
is bounded by  $2^{-nS(CQ)_{\omega}+n\delta}$ (see Sec.~\ref{sec: appendix}). We use this and apply H\"{o}lder's inequality to bound the fidelity.



\begin{theorem}\label{thm: F_b<S_beta} 
For any super-unital decoder, the fidelity for the blind compression of a $\delta$-typical state $\omega^{C^nQ^nR^n}_{\delta}=(\Pi_{\delta}\ox \1_{R^n})\omega^{C^nQ^nR^n} (\Pi_{\delta}\ox \1_{R^n})$, with $\omega^{C^nQ^nR^n}$ as a Koashi-Imoto decomposition of $\rho^{A^nR^n}$, is bounded as 
\begin{align}
F^2(\omega^{C^nQ^nR^n}_{\delta},\zeta^{\hat{C}^n\hat{Q}^nR^n})\leq  2^{-nS(CQ)_{\omega}+n\delta+n\cQ(n,\epsilon)},  
\end{align}
\end{theorem}

\begin{proof}
Consider the $\delta$-typical state $\omega^{C^nQ^n}_{\delta}$ as defined in Section~\ref{sec: appendix}.
We obtain
\begin{align}
F(\omega^{C^nQ^nR^n}_{\delta},\zeta^{\hat{C}^n\hat{Q}^nR^n})&\leq F(\omega^{C^nQ^n}_{\delta},\zeta^{\hat{C}^n\hat{Q}^n}) \nonumber\\
&=\norm{\sqrt{\omega^{C^nQ^n}_{\delta}} \sqrt{\zeta^{\hat{C}^n\hat{Q}^n}}}_1 \nonumber\\
&\leq \norm{\sqrt{\omega^{C^nQ^n}_{\delta}} }_{\infty} \norm{\sqrt{\zeta^{\hat{C}^n\hat{Q}^n}}}_1 \nonumber\\
&\leq 2^{\frac{-nS(CQ)_{\omega}+n\delta}{2}} \norm{\sqrt{\zeta^{\hat{C}^n\hat{Q}^n}}}_1 \nonumber\\
&\leq 2^{\frac{-nS(CQ)_{\omega}+n\delta}{2}} \norm{\sqrt{\zeta^{\hat{C}^n\hat{Q}^n}}}_2 \norm{\sqrt{\1_{\hat{C}^n\hat{Q}^n}}}_2 \nonumber\\
&= 2^{\frac{-nS(CQ)_{\omega}+n\delta}{2}}  \norm{\sqrt{\1_{\hat{C}^n\hat{Q}^n}}}_2 \nonumber\\
&\leq 2^{\frac{-nS(CQ)_{\omega}+n\delta}{2}}  \norm{\sqrt{\cD(\1_{M})}}_2 \nonumber\\
&= 2^{\frac{-nS(CQ)_{\omega}+n\delta}{2}}   \sqrt{\tr(\cD(\1_{M}))} \nonumber\\
&= 2^{\frac{-nS(CQ)_{\omega}+n\delta+\log |M|}{2}},  
\end{align}
where the first line is due to monotonicity of the fidelity under partial trace operation.
In the third and fifth lines, we apply H\"{o}lder's inequality. In the seventh line, we assume that the decoder
is super-unital i.e. $\1_{\hat{C}^n\hat{Q}^n} \leq \cD(\1_{M})$. The last line follows because the decoder is a trace preserving map.

\end{proof}

We now obtain a bound on the fidelity for the compression of the original state.
Applying an isometry or a reversible map does not change the fidelity 
\begin{align}
    \overline{F}=F(\rho^{A^n R^n},\xi^{\hat{A}^nR^n}) =F(\omega^{C^n N^n Q^n R^n}, \zeta^{\hat{C}^n \hat{N}^n \hat{Q}^n R^n}) =F(\omega^{C^n Q^n R^n}, \zeta^{\hat{C}^n  \hat{Q}^n R^n}). \nonumber
\end{align}
By Lemma~\ref{lemma: typicality} and inequality in Eq.~(\ref{lemma: Fuchs}), the fidelity $F(\omega^{C^n N^n Q^n R^n}, \omega^{C^n N^n Q^n R^n}_{\delta})\geq \sqrt{1-2^{-nc \delta^2}}$
for some constant $c > 0$ independent of $\delta$ and $n$. 
In Theorem~\ref{thm: F_b<S_beta}, we bounded the fidelity for $\omega^{C^n Q^n R^n}_{\delta}$. 
Now, we apply the fidelity relation in Eq.~(4) of \cite{Barnum1996} to obtain
\begin{align}
\overline{F}&=F(\omega^{C^n Q^n R^n}, \zeta^{\hat{C}^n  \hat{Q}^n R^n}) \nonumber\\
& \leq  2^{\frac{-nS(CQ)_{\omega}+n\delta+\log |M|}{2}} + 2\left(1 - ({1-2^{-nc \delta^2}})^{\frac{1}{4}}\right) + 2\sqrt{2} \sqrt{ 2^{\frac{-nS(CQ)_{\omega}+n\delta+\log |M|}{2}} \left(1 - ({1-2^{-nc \delta^2}})^{\frac{1}{4}}\right)}.  
\end{align}
The above fidelity converges to 0 as $n$ goes to $\infty$. 


\section{Discussion}\label{sec: discussion}
The entanglement of purification appears as the optimal rate in the visible compression of mixed states. 
We show that it is additive for partially-exchangeable states, which exhibit certain extendibility  properties.
We took inspiration from the additivity proof and extended it to establish a lower bound
on the compression rate.  We obtain the lower bound in terms of smooth min-max entropies, which leave room for handling
larger errors, as opposed to continuity bounds for the von Neumann entropy, which typically lead to weak converse bounds in the regime of  vanishing error.
%
In the lower bound, we apply Lemma~\ref{lemma: H_min < H_max+terms} as well as  the chain rules of Lemma~\ref{lemma: smooth chain rules} three times. This restricts
the range of error up to $\frac{1}{3\sqrt{2}}$. This is similar to the approach in \cite{AW_pretty}, which derives a pretty converse bound for the quantum and private capacity of degradable channels. However, in that case, the chain rule is applied only once, allowing for larger errors up to $\frac{1}{\sqrt{2}}$.

Then, to deal with the visible compression of more general mixed states, we define a new quantity in Definition~\ref{def: E_alpha_p} which is similar to the entanglement of purification, however, instead of von Neumann entropy, the minimization of the  R\'enyi entropy is taken into account. We obtain an upper bound on the fidelity criterion in terms of the regularization of the $\alpha$-R\'enyi entanglement of purification  $E^{\infty}_{\alpha,p}(A:R)_{\rho}$ and show that for any rate below $\lim_{\alpha \to 1^+}E^{\infty}_{\alpha,p}(A:R)_{\rho}$
the fidelity exponentially converges to zero.
However, it remains an open question whether this quantity is continuous, namely, if $\lim_{\alpha \to 1^+}E_{\alpha,p}^{\infty}(A:R)_{\rho}=E_{p}^{\infty}(A:R)_{\rho}$ holds true? 
This continuity obviously implies that the strong converse bound holds for the visible compression scheme.

Finally, we prove a strong converse bound for the compression of a general mixed-state source $\rho^{AR}$ shared between an encoder and an inaccessible reference system as defined in \cite{ZBK_PhD,ZK_mixed_state_ISIT_2020,general_mixed_state_compression} by  assuming that the decoder is a super-unital map. This implies a strong converse for the blind compression of ensembles of mixed states, by assuming a super-unital decoder, since it is a special case of the general mixed-state source when the reference system $R$ is classical. 
The proof is relatively simple: We apply H\"{o}lder's inequality  to bound the fidelity for states supported on the typical subspace, where the logarithm of the operator norm is approximately bounded by the entropy of the non-redundant parts of the source. Then, we use the fact that for any
super-unital channel $\cD:A \to B$ the inequality  $\1_B \leq \cD(\1_A)$ holds.

\bigskip

\noindent \textbf{Acknowledgments.}
%
%
I am grateful to Andreas Winter and Debbie Leung for their helpful discussions and to Kohdai Kuroiwa for pointing out an error in Lemma 14 in the first version of this paper.
The author was supported by the MCQST (Munich Center for Quantum Science and Technology) Distinguished Postdoctoral Fellowship
and the Ada Lovelace Postdoctoral Fellowship at Perimeter Institute for Theoretical Physics.
%
\appendix

\section*{Appendix: Miscellaneous Lemmas}\label{sec: appendix}
In this section, we include various lemmas and facts that we apply throughout the paper.

\medskip
Consider a density matrix with spectral decomposition $\varphi^{A'} = \sum_x p_x \proj{x}^{A'}$. Its $n$-tensor power can be written as
\begin{equation*}
    (\varphi^{A'})^{\otimes n} = \sum_{x^n} p_{x^n}\proj{x^n}^{A^n}
\end{equation*}
where $p_{x^n} = p_{x_1} \cdots p_{x_n}$ and $|x^n\rangle^{A^n} = |x_1\rangle \otimes \cdots \otimes |x_n\rangle^{A}$. The $\delta$-(entropy) typical subspace $A_\delta \subseteq A^n$ is defined as
\begin{equation*}
    A_\delta = \text{span} \left\{ |x^n\rangle^{A^n} : \left| \frac{1}{n} \log p_{x^n} - S(\varphi^{A'}) \right| \leq \delta \right\}
\end{equation*}
and the $\delta$-typical projection $\Pi_\delta^{A^n}$ is defined to project $A^n$ onto $A_\delta$.

\begin{lemma}\label{lemma: typicality}(Typicality \cite{Horodecki2007}) \textit{Let a  state $\varphi^{A}$ be given. For every $\delta > 0$ and all sufficiently large $n$, there exist $\delta$-typical projections $\Pi_\delta^{(A)}$ onto $\delta$-typical subspaces $A_\delta \subseteq A^n$ } 
\begin{align}
    \varphi^{A^n } &= (\varphi^{A})^{\otimes n}  \\
    \varphi_\delta^{A^n } &= \Pi_\delta^{A^n}   \varphi^{A^n }   \Pi_\delta^{A^n}  
\end{align}
satisfy
\begin{align}
    |A_\delta| &\leq 2^{nS(A)+n\delta} \\
    \tr \left( \Pi_\delta^{A^n} \right) &\leq 2^{-nS(A)+n\delta} \\
    \norm{ \varphi_\delta^{A^n }}_{\infty}&\leq 2^{-nS(A)+n\delta} \\
    \|\varphi^{A^n } - \varphi_\delta^{A^n }\|_1 &\leq \epsilon 
    \end{align}
where $\epsilon = 2^{-nc\delta^2}$ for some constant $c > 0$ independent of $\delta$ and $n$.
\end{lemma}

\begin{lemma}\label{lemma: smooth chain rules} (Chain rules \cite{One_Shot_Decoupling2014,smooth_chain_rules})
Let $\epsilon, \delta \geq 0$, $\eta > 0$.
Then, with respect to a state $\rho \in \cS(ABC)$,
\begin{align}
H_{\max}^{\epsilon}(AB|C)_{\rho} &\geq H_{\min}^{\delta}(B|C)_{\rho}+H_{\max}^{\epsilon+2\delta+2\eta}(A|BC)_{\rho}-3 \log\frac{2}{\eta^2}.\\
 H_{\max}^{\epsilon+2\delta+\eta}(AB|C)_{\rho}& \leq H_{\max}^{\epsilon}(A|BC)_{\rho}+H_{\max}^{\delta}(B|C)_{\rho}+\log\frac{2}{\eta^2}.
\end{align}


\begin{lemma}\label{lemma: H_min < H_max+terms} (Proposition 5.5 in \cite{Tomamichel_PhD})
    Let $\rho \in \cS(AB)$, $\alpha,\beta \geq 0$ and $\alpha+\beta  <\frac{\pi}{2}$. Then
    \begin{align}
        H_{\min}^{\sin \alpha}(A|B)_{\rho} &\leq H_{\max}^{\sin \beta}(A|B)_{\rho}+\log \frac{1}{\cos^2(\alpha+\beta)}.  
            \end{align}
\end{lemma}

\begin{theorem}\label{thm: AEP}(Min- and max-entropy AEP \cite{Renner_PhD}, \cite{Tomamichel_PhD})
Let $\rho^{AB} \in \cS(AB)$ and $0\leq \epsilon \leq 1$. Then,
\begin{align}
  \lim_{n \to \infty} \frac{1}{n} H_{\max}^{\epsilon}(A|B)_{\rho^{\ox n}}=\lim_{n \to \infty} \frac{1}{n} H_{\min}^{\epsilon}(A|B)_{\rho^{\ox n}}=S(A|B)_{\rho}.
\end{align}
More precisely, for a purification $\ket{\psi}^{ABC}$ of $\rho$, denote
$\theta_X:=\log \norm{(\psi^X)^{-1}}$, where the inverse is the generalized
inverse (restricted to the support), for $X = B,C$. Then, for
every $n$,
\begin{align}
    H_{\max}^{\epsilon}(A|B)_{\rho^{\ox n}} &\leq nS(A|B)_{\rho} +(\theta_B+\theta_C)\sqrt{n \log \frac{2}{\epsilon}} \\
    H_{\min}^{\epsilon}(A|B)_{\rho^{\ox n}} &\geq nS(A|B)_{\rho} -(\theta_B+\theta_C)\sqrt{n \log \frac{2}{\epsilon}}.
    \end{align}
\end{theorem}

\end{lemma}

\begin{lemma}\label{lemma: g_m(alpha)}
Let $\{\sigma^n=\cT_n(\rho^{\otimes n})\}_n$ be a sequence of states for $n\geq 1$ where $\{\cT_n\}_n$ is a sequence of CPTP maps. For $\alpha>1$ define $g_{\alpha,n}(\sigma^n):=\frac{(1-\alpha)S_{\alpha}(\sigma^n)}{n}=\frac{\log \Tr(\sigma^n)^{\alpha} }{n}$. Then, $g_{\alpha,n}(\sigma^n)$ has the following properties  for $\alpha>1$:
\begin{enumerate}[(i)]
    \item It is a decreasing function of $\alpha$.
    \item It is a convex function of $\alpha$.
\end{enumerate}
\end{lemma}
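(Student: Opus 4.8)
The plan is to notice that for each fixed $n$ the sequence and CPTP-map structure is a red herring: $\sigma^n$ is just \emph{some} fixed density operator, and $g_{\alpha,n}(\sigma^n)=\frac1n\log\Tr(\sigma^n)^{\alpha}$ is $\frac1n$ times the single function $f(\alpha):=\log\Tr\sigma^{\alpha}$ evaluated at $\sigma=\sigma^n$. Because $\frac1n>0$ is a positive constant independent of $\alpha$, it preserves monotonicity, convexity and continuity, so it suffices to establish all three properties for $f$ with $\sigma$ an arbitrary fixed state. I would then diagonalise $\sigma=\sum_i\lambda_i\proj{i}$, with eigenvalues $\lambda_i\in[0,1]$ summing to $1$, and discard the zero eigenvalues (which contribute nothing for $\alpha>0$), reducing everything to
\begin{align}
f(\alpha)=\log\sum_{i:\,\lambda_i>0}\lambda_i^{\alpha}=\log\sum_{i:\,\lambda_i>0}e^{\alpha\ln\lambda_i}. \nonumber
\end{align}

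In this eigenvalue form all three claims become elementary. For continuity (iii), each $\alpha\mapsto\lambda_i^{\alpha}$ is smooth, the sum is finite and strictly positive (since $\sigma\neq0$ has at least one positive eigenvalue), and $\log$ is continuous on $(0,\infty)$; hence $f$ is continuous, indeed $C^{\infty}$, for $\alpha>1$. For monotonicity (i), $\lambda_i\le1$ gives $\ln\lambda_i\le0$, so $f'(\alpha)=\big(\sum_i\lambda_i^{\alpha}\big)^{-1}\sum_i\lambda_i^{\alpha}\ln\lambda_i\le0$; this is non-increasing, matching the lemma's ``decreasing'' (with equality throughout only in the pure-state case). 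For convexity (ii), the representation $f(\alpha)=\log\sum_i e^{\alpha\ln\lambda_i}$ displays $f$ as a log-sum-exp of affine functions of $\alpha$, which is convex; concretely, with the Gibbs weights $q_i:=\lambda_i^{\alpha}/\sum_k\lambda_k^{\alpha}$ one computes $f''(\alpha)=\sum_i q_i(\ln\lambda_i)^2-\big(\sum_i q_i\ln\lambda_i\big)^2=\operatorname{Var}_q(\ln\lambda)\ge0$.

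The only step requiring a genuine idea is convexity (ii); the cleanest route is the cumulant-generating-function viewpoint above, which identifies $f''$ with a variance and hence makes nonnegativity automatic, whereas differentiating $\Tr\sigma^{\alpha}=\Tr e^{\alpha\ln\sigma}$ directly would force one to handle $\frac{d^2}{d\alpha^2}\Tr\sigma^{\alpha}=\Tr\big(\sigma^{\alpha}(\ln\sigma)^2\big)$ together with the cross term and argue nonnegativity less transparently. I therefore expect the passage to the spectral decomposition to be the decisive simplification, after which (i) and (iii) are immediate and (ii) is the standard convexity of log-sum-exp. I also note that neither the normalisation $1/n$ nor the specific form $\cT_n(\rho^{\otimes n})$ plays any role beyond ensuring each $\sigma^n$ is a nonzero state, so this lemma is insulated from the delicate $n\to\infty$ regularisation that underlies the paper's genuine open continuity question at $\alpha=1$.
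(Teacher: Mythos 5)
Your proposal is correct and takes essentially the same route as the paper: diagonalise $\sigma^n$, write $g_{\alpha,n}$ in terms of the eigenvalues, and establish (i) and (ii) by computing the first and second derivatives with the Gibbs weights $\lambda_i^{\alpha}/\sum_k\lambda_k^{\alpha}$, the second derivative being a variance and hence nonnegative. Your handling of (i) and (iii) is in fact marginally cleaner — the paper bounds $f'$ through Jensen's inequality where you just use $\log\lambda_i\le 0$, and it deduces (iii) from convexity of $g$ on the open interval rather than from smoothness of a finite positive sum — but these are cosmetic differences within the same argument.
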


\begin{proof}
(i) Consider the spectral decomposition $\sigma^n=\sum_{i=1}^{d^n} \lambda_{i,n}\ketbra{v_{i,n}}$ where $d$ is the dimension of a single output system.
Define $\gamma_{\alpha,n}:=\sum_{i=1}^{d^n} \lambda_{i,n}^{\alpha}$. Then, the derivative of $g_{\alpha,n}(\sigma^n)$ with respect to $\alpha$ is always non-positive:
\begin{align}
 g'_{\alpha,n}(\sigma^n)&=\frac{1}{n} \frac{\sum_{i=1}^{d^n} \lambda_{i,n}^{\alpha}\log\lambda_{i,n}}{\sum_{i=1}^{d^n} \lambda_{i,n}^{\alpha}} \nonumber\\ &=\frac{1}{n} \sum_{i=1}^{d^n} \frac{\lambda_{i,n}^{\alpha}}{\gamma_{\alpha,n}}\log\lambda_{i,n} \nonumber\\
 &\leq \frac{1}{n} \log \sum_{i=1}^{d^n} \frac{\lambda_{i,n}^{\alpha}}{\gamma_{\alpha,n}}\lambda_{i,n} \nonumber \\
 &\leq \frac{1}{n} \log \sum_{i=1}^{d^n} \lambda_{i,n}\nonumber \\
 &=0,\nonumber
\end{align}
where in the first line the denominator is equal to $\gamma_{\alpha,n}$. In the second line note that $\{\frac{\lambda_{i,n}^{\alpha}}{\gamma_{\alpha,n}}\}_i$ is a probability distribution, therefore, the the third line follows from the concavity of $\log(\cdot)$.
The penultimate line follows because $0\leq \frac{\lambda_{i,n}^{\alpha}}{\gamma_{\alpha,n}}\leq 1$. The last line follows because the eigenvalues sum up to one.

\noindent (ii) The second derivative with respect to $\alpha$ is always non-negative:
\begin{align}
    g''_{\alpha,n}(\sigma^n)&=\frac{1}{n} \frac{\sum_{i=1}^{d^n} \lambda_{i,n}^{\alpha}(\log\lambda_{i,n})^2}{\sum_{i=1}^{d^n} \lambda_{i,n}^{\alpha}}
    -\frac{1}{n} \frac{(\sum_{i=1}^{d^n} \lambda_{i,n}^{\alpha}\log\lambda_{i,n})^2}{(\sum_{i=1}^{d^n} \lambda_{i,n}^{\alpha})^2}\nonumber \\
    &=\frac{1}{n} \sum_{i=1}^{d^n} \frac{\lambda_{i,n}^{\alpha}}{\gamma_{\alpha,n}}(\log\lambda_{i,n})^2
    -\frac{1}{n} \left(\sum_{i=1}^{d^n} \frac{\lambda_{i,n}^{\alpha}}{\gamma_{\alpha,n}}\log\lambda_{i,n}\right)^2\nonumber\\
    & \geq \frac{1}{n} \sum_{i=1}^{d^n} \frac{\lambda_{i,n}^{\alpha}}{\gamma_{\alpha,n}}(\log\lambda_{i,n})^2-\frac{1}{n} \sum_{i=1}^{d^n} \frac{\lambda_{i,n}^{\alpha}}{\gamma_{\alpha,n}}(\log\lambda_{i,n})^2\nonumber\\
    &=0,\nonumber
\end{align}
where in the second line $\{\frac{\lambda_{i,n}^{\alpha}}{\gamma_{\alpha,n}}\}_i$ is a probability distribution, therefore, the third line follows from the concavity of $f(x)=-x^2$.

\end{proof} 

\vspace{-1cm}

%

\bibliography{strong_converse_references}
\end{document}